\newcommand{\figurePath}[1]{#1.pdf}
\newtheoremstyle{example}{\topsep}{\topsep}%
     {}
     {}
     {\bfseries}
     {.}
     {\newline}
     {\thmname{#1}\thmnumber{ #2}\thmnote{ #3}}
\newtheorem{defn}{Definition}
\newtheorem{thm}{Theorem}
\newtheorem{property}[thm]{Property}
\newtheorem{lem}[thm]{Lemma}
\newtheorem{corollary}[thm]{Corollary}
   \theoremstyle{example}
   \newtheorem{example}{Example}
\newcommand{\DT}[1]{
\protect\raisebox{0.5em}{\scriptsize $\protect\underrightarrow{\;(#1)\;}$}
}
\newcommand{\pre}[0]{\prec}
\newcommand{\preCons}[0]{\triangleleft}
\newcommand{\succI}[0]{succ_I}
\newcommand{\succIp}[0]{succ_{I'}}
\newcommand{\succIph}[0]{succ_{I_{\phi}}}
\renewcommand{\L}[0]{L}
\newcommand{\Lp}[0]{L'}
\newcommand{\B}[1]{\mathcal B_{#1}}
\newcommand{\Bdot}[1]{\mathcal B^{\w}_{#1}}
\newcommand{\Bp}[1]{\mathcal B'_{#1}}
\newcommand{\BB}[0]{\mathcal B}
\newcommand{\sL}[0]{\lvert L\rvert}
\newcommand{\inst}{\ensuremath{\left<\Sigma,T,\psi\right>}}
\newcommand{\instp}{\ensuremath{\left<\Sigma',T',\psi'\right>}}
\renewcommand{\ni}[0]{ni}
\newcommand{\no}[0]{no}
\newcommand{\block}[0]{block_{I,\BB}}
\newcommand{\blockI}[0]{block_{I,\BB}}
\newcommand{\blockIp}[0]{block_{I',\BB'}}
\newcommand{\Condition}[1]{{\it{\em (#1)}}}
\newcommand{\Cii}[0]{\Condition{i}}
\newcommand{\Ciii}[0]{\Condition{ii}}
\newcommand{\Civ}[0]{\Condition{iii}}
\newcommand{\Cv}[0]{\Condition{iv}}
\newcommand{\zz}{}
\newcommand{\emptyinst}{\varepsilon}
\newcommand{\intvl}[2]{\ensuremath{\left\llbracket #1\,;\,#2 \right\rrbracket}}
\newcommand{\w}[0]{\protect\raisebox{0.1em}{$\centerdot$}}
\definecolor{rouge}{rgb}{1.0, 0.0, 0.0}
\definecolor{darkred}{rgb}{0.8, 0.0, 0.0}
\newcommand{\blockFrame}[1]{ \text{\fcolorbox{red}{white}{$ #1$}}}
\newcommand{\blockFont}[1]{\mathsf{#1}}
\title{Sorting by Transpositions is Difficult}
\date{}
\author{Laurent Bulteau, Guillaume Fertin, Irena Rusu\smallskip\\
\small
Laboratoire d'Informatique de Nantes-Atlantique (LINA), UMR CNRS
  6241\\ \small
  Universit\'e de Nantes, 2 rue de la Houssini\`ere, 44322 Nantes 
  Cedex 3 - France \\ \small
\texttt{ \{Laurent.Bulteau, Guillaume.Fertin, Irena.Rusu\}@univ-nantes.fr}
}
\begin{document}
\maketitle
\begin{center}\begin{minipage}{0.75\textwidth}
{\bf Abstract.}  
In comparative genomics, a transposition is an operation that exchanges two 
consecutive sequences of genes in a genome. 
The transposition
distance, that is, the minimum number of transpositions needed to transform a 
genome into another, is, according to numerous studies, 
a relevant evolutionary distance. 
The problem of computing this 
distance when genomes are represented by permutations, called the 
\textsc{Sorting by Transpositions} problem, has been introduced by 
Bafna and Pevzner \cite{BafnaPevzner95} in 1995. 
It has naturally been the focus of a number of studies, but the computational
complexity of this problem has remained undetermined for 15 years.

In this paper, we answer this long-standing open question by proving that the 
\textsc{Sorting by Transpositions} problem is \textsf{NP}-hard. 
As a corollary of our result, we also prove that the following problem \cite{Christie98} is NP-hard: given a permutation $\pi$, is it possible
to sort $\pi$ using $d_b(\pi)/3$ permutations, where $d_b(\pi)$ is the number of breakpoints of $\pi$?

\end{minipage}
\end{center}
\section*{Introduction}

Along with reversals, transpositions are one of the most elementary 
large-scale operations that can affect a genome. A transposition 
consists in swapping two consecutive sequences of genes or, 
equivalently, in moving a sequence of 
genes from one place to another in the genome. 
The transposition distance between two genomes is the minimum number of such 
operations that are needed to transform one genome into the other. 
Computing this distance is a challenge in comparative genomics, 
since it gives a maximum
parsimony evolution scenario between the two genomes.

The \textsc{Sorting by Transpositions} problem is the problem of computing the
transposition distance between genomes represented by permutations.
Since its introduction by Bafna and 
Pevzner~\cite{BafnaPevzner95,BafnaPevzner98}, 
the complexity class of this problem has never been established. Hence
a number of studies 
\cite{BafnaPevzner98,Christie98,GuPC99,
HartmanS06,EliasH06,
Benoit-GagneH07,
FengZ07} 
aim at 
designing approximation algorithms or heuristics, the best known fixed-ratio 
algorithm being a 1.375-approximation~\cite{EliasH06}. 
Other works \cite{
GuyerHV97,Christie98,ErikssonEKSW01,Labarre06,EliasH06,Benoit-GagneH07}
aim at computing bounds on the transposition distance of 
a permutation.
Studies have also been devoted to variants of this problem, by considering, 
for example, prefix transpositions \cite{DiasM02,Labarre08,ChitturiS08}
(in which one of the blocks is a prefix of the sequence),
or distance between strings \cite{ChristieI01,CormodeM02,ShapiraS02,RadcliffeSW05,KolmanW06} 
(where multiple occurences of each 
element are allowed in the sequences), possibly
with weighted or prefix transpositions 
\cite{Qi2006,Bongartz06,AmirABLLPSV06,AmirAILP07,ChitturiS08}.

In this paper, we address the long-standing issue of determining
the complexity class of the \textsc{Sorting by Transpositions} problem, 
by giving a polynomial time
reduction from SAT, thus proving the \textsf{NP}-hardness of this problem.
Our reduction is based on the study of transpositions removing three 
breakpoints. A corollary of our result is the \textsf{NP}-hardness of the
 following problem, introduced by \cite{Christie98}: 
given a permutation $\pi$, is it possible
to sort $\pi$ using $d_b(\pi)/3$ permutations, 
where $d_b(\pi)$ is the number of breakpoints of $\pi$?


\section{Preliminaries}


\subsection{Transpositions and Breakpoints}

In this paper, $n$ denotes a positive integer.
Let $\intvl{a}{b}={\{x\in\mathbb N\mid a\leq x\leq b\}}$,
and $Id_n$ be the identity permutation over \intvl{0}{n}.
We consider only permutations of 
\intvl{0}{n} such that $0$ and $n$ are fixed-points.
Given a word $u_1\ u_2\ \ldots\ u_l$, 
a \emph{subword} is a subsequence 
$u_{p_1}\ u_{p_2}\ \ldots\ u_{p_{l'}}$, where $1\leq p_1<p_2<\ldots <p_{l'}\leq l$. 
A \emph{factor} is a subsequence of contiguous elements, i.e. a subword with $p_{k+1}=p_k+1$ 
for every $k\in\intvl1{l'-1}$.

A transposition is an operation that exchanges two consecutive 
factors of a sequence.
As we only work with permutations, it is defined as a permutation $\tau_{i,j,k}$, which, once composed to a permutation $\pi$, realise this operation 
(see Figure~\ref{fig:ex1a}). 
The transposition $\tau_{i,j,k}$ is formally defined as follows.
\begin{defn}[Transposition] \label{def:tau}
Given three integers $i,j,k$ such that $0< i<j<k \leq n$, the \emph{transposition} 
$\tau_{i,j,k}$ over $\intvl{0}{n}$ is the following permutation (we write $q(j)=k+i-j$):

\begin{align*}
&\text{For any }0\leq x< i,&& \tau_{i,j,k}(x)=x\\
&\text{For any }i\leq x <q(j),&& \tau_{i,j,k}(x)=x+j-i\\
&\text{For any }q(j)\leq x <k,&& \tau_{i,j,k}(x)=x+j-k\\
&\text{For any }k\leq x \leq n,&& \tau_{i,j,k}(x)=x
\end{align*}
\end{defn}

Note that the inverse function of $\tau_{i,j,k}$ is also a transposition. More precisely, $\tau_{i,j,k}^{-1}=\tau_{i,q(j),k}$.

\begin{figure}[t]\begin{center}
$$\begin{array}{ccc}
\pi&=&(\pi_0 \pi_1 \ldots  \pi_{i-1}  \underline{\pi_i \ldots  \pi_{j-1}}  \ \underline{\pi_j \quad \ldots \quad \pi_{k-1}} \pi_k \ldots  \pi_n)\\
\pi\circ\tau_{i,j,k}&=&(\pi_0 \pi_1 \ldots  \pi_{i-1} \underline{\pi_j \quad \ldots \quad \pi_{k-1}} \   \underline{\pi_i \ldots \pi_{j-1}} \pi_k \ldots\pi_n) 
\end{array}$$
\caption{\label{fig:ex1a}Representation of a transposition $\tau_{i,j,k}$, with $0< i<j<k \leq n$.}
\end{center}
\end{figure}

The following two properties directly follow from the definition of 
a transposition:
\begin{property}\label{prop:echangeTau}
Let $\tau=\tau_{i,j,k}$ be a transposition, $q(j)=k+i-j$, and $u,v\in\intvl0n$ be two integers such that $u<v$. 
Then:
\begin{eqnarray*}
\tau(u)>\tau(v)& \Leftrightarrow& i\leq u<q(j) \leq v<k\\
\tau^{-1}(u)>\tau^{-1}(v) &\Leftrightarrow& i\leq u<j \leq v<k
\end{eqnarray*}
\end{property}


\begin{property}\label{prop:tauXmoinsUn}
Let $\tau$ be the transposition $\tau=\tau_{i,j,k}$, and 
write $q(j)=k+i-j$.
For all $x\in\intvl1n$, the values of $\tau(x-1)$ and $\tau^{-1}(x-1)$ are the following:
$$\begin{array}{lrcl}
\forall x \notin\{i,q(j),k\},&\tau(x-1)&=&\tau(x)-1\\
\forall x \notin\{i,j,k\},&\tau^{-1}(x-1)&=&\tau^{-1}(x)-1
\end{array}$$
$$\begin{array}{rclrcl}
\tau(i-1)&=&\tau(q(j))-1\hspace{0.5cm}&\tau^{-1}(i-1)&=&\tau^{-1}(j)-1\\
\tau(q(j)-1)&=&\tau(k)-1&\tau^{-1}(j-1)&=&\tau^{-1}(k)-1\\
\tau(k-1)&=&\tau(i)-1&\tau^{-1}(k-1)&=&\tau^{-1}(i)-1\\
\end{array}$$
\end{property}

\begin{defn} [Breakpoints]
Let $\pi$ be a permutation of $\intvl{0}{n}$. If $x\in\intvl1n$ is an integer such that 
$\pi(x-1)=\pi(x)-1$, then $(x-1,x)$ is \emph{an adjacency} of $\pi$, otherwise it is \emph{a breakpoint}. 
We write $d_b(\pi)$ the number of breakpoints of $\pi$.
\end{defn}

The following property yields that the number of breakpoints of a permutation can be reduced by at most 3 when a transposition is applied:

\begin{property}\label{prop:td-bd}
Let $\pi$ be a permutation and $\tau=\tau_{i,j,k}$ be a transposition 
(with $0<i<j<k\leq n$). 
Then, for all $x\in\intvl{1}{n}-\{i,j,k\}$,
\begin{equation*}
(x-1,x) \text{ is an adjacency of } \pi \Leftrightarrow 
(\tau^{-1}(x)-1,\tau^{-1}(x)) \text{ is an adjacency of } \pi\circ\tau.
\end{equation*}
Overall, we have $d_b(\pi\circ\tau)\geq d_b(\pi)-3$.
\end{property}
\begin{proof}
For all $x\in\intvl{1}{n}-\{i,j,k\}$, we have:
\begin{align*}
(x-1,x) \text{ adjacency of } \pi 
&\Leftrightarrow \pi(x-1)=\pi(x)-1\\
&\Leftrightarrow \pi(\tau(\tau^{-1}(x-1)))=\pi(\tau(\tau^{-1}(x)))-1\\
&\Leftrightarrow \pi\circ\tau(\tau^{-1}(x)-1)=\pi\circ\tau(\tau^{-1}(x))-1\text{ by Prop.~\ref{prop:tauXmoinsUn}}\\
&\Leftrightarrow (\tau^{-1}(x)-1,\tau^{-1}(x)) \text{ adjacency of } \pi\circ\tau.
\end{align*}
\end{proof}


\subsection{Transposition distance}

The transposition distance of a permutation is the minimum number of 
transpositions needed to transform it into the identity. 
A formal definition is the following:
\begin{defn}[Transposition distance]
Let $\pi$ be a permutation of $\intvl0n$. The \emph{transposition 
distance} $d_t(\pi)$ from 
$\pi$ to $Id_n$ is the minimum value $k$ for which there exist $k$ 
transpositions $\tau_1,\tau_2,\ldots,\tau_k$, satisfying:
\begin{equation*}
\pi\circ\tau_k\circ\ldots\circ \tau_2\circ\tau_1=Id_n
\end{equation*}
\end{defn}
 The decision problem of 
computing the transposition distance is the following:
\begin{center}
\framebox{
\begin{tabular}{l}
\noindent {\sc Sorting by Transpositions Problem~\cite{BafnaPevzner95}}
\\ \noindent {\sc Input:} A permutation $\pi$, an integer $k$.
\\ \noindent {\sc Question:}  Is $d_t(\pi)\leq k$?
\end{tabular}
}
\end{center}

The following property directly follows from Property~\ref{prop:td-bd}, 
since for any $n$ the number of breakpoints of $Id_n$ is $0$.
\begin{property}\label{prop:td-bd2}
Let $\pi$ be a permutation, then $d_t(\pi)\geq d_b(\pi)/3$.
\end{property}
Figure~\ref{fig:ex1b} gives an example of the computation of the 
transposition distance.

\begin{figure}[t]\begin{center}$$
\begin{array}{lcl}
\pi&=&0\ \underline{2\ 4}\ \underline{3\ 1}\ 5 \\
\pi\circ\tau_{1,3,5}&=& 0\ \underline{3}\ \underline{1\ 2}\ 4\ 5 \\
\pi\circ\tau_{1,3,5}\circ\tau_{1,2,4}&=&  0\ 1\ 2\ 3\ 4\ 5 \\
\end{array}$$
\caption{\label{fig:ex1b}The transposition distance from 
$\pi=(0\ 2\ 4\ 3\ 1\ 5)$ to $Id_5$ is $2$: 
it is at most $2$ since $\pi\circ\tau_{1,3,5}\circ \tau_{1,2,4}=Id_5$, 
and it cannot be less than $2$ since Property~\ref{prop:td-bd2} applies 
with $d_b(\pi)/3=5/3>1$.}
\end{center}
\end{figure}


\section{3-Deletion and Transposition Operations}

In this section, we introduce 3DT-instances, which are the cornerstone of
our reduction from SAT to the \textsc{Sorting by Transpositions} problem, since 
they are used as an intermediate between instances of the two problems. 
We first define 3DT-instances and the possible operations that can be 
applied to them, then we focus on the equivalence 
between these instances and permutations.


\subsection{3DT-instances}


\begin{defn}[3DT-instance]
\label{def:3DTinstance}
A \emph{3DT-instance} $I=\left<\Sigma,T,\psi\right>$ of span $n$ 
is composed of the following elements:
\begin{itemize}
\item $\Sigma$: an alphabet;
\item $T=\{(a_i,b_i,c_i)\mid 1\leq i\leq \left\lvert T\right\rvert\}$: 
	a set of (ordered) triples of elements of $\Sigma$, partitioning $\Sigma$ 
(i.e. all elements are pairwise distinct, and 
$\bigcup_{i=1}^{|T|} \{a_i,b_i,c_i\}=\Sigma$);
\item $\psi:\Sigma\rightarrow\intvl{1}{n}$, an injection.
\end{itemize}

The \emph{domain} of $I$ is the image of $\psi$, that is the set 
$L=\{\psi(\sigma)\mid \sigma\in\Sigma\}$.

The \emph{word representation} of $I$ is the $n$-letter word 
$u_1\ u_2\ldots u_n$ over $\Sigma\cup\{\w\}$
(where $\w\notin\Sigma$),
such that for all $i\in L$, $\psi(u_i)=i$, and for 
$i\in\intvl{1}{n}-L$, $u_i=\w$.
\end{defn}

Two examples of 3DT-instances are given in 
Example~\ref{ex:3DTinstance}. Note that 
such instances can be defined by their word representation and by 
their set of triples $T$. 
The empty 3DT-instance, in which $\Sigma=\emptyset$,
can be written with a sequence of $n$ dots, or 
with the empty word $\emptyinst$.



\begin{example}
In this example, we define two 3DT-instances of span 6, $I=\inst$ and $I'=\instp$: 
$$
\begin{array}{lcll}
I&=&a_1\ c_2\ b_1\ b_2\ c_1\ a_2&\text{ with }T=\{(a_1,b_1,c_1),(a_2,b_2,c_2)\} \\
I'&=&\;\;\w\;\ b_2\ \;\w\; \ c_2\ \;\w\;\ a_2  &\text{ with }T'=\{(a_2,b_2,c_2)\} 
\end{array}$$
\label{ex:3DTinstance}
Here, $I$ has an alphabet of size 6, $\Sigma=\{a_1,b_1,c_1,a_2,b_2,c_2\}$, 
hence $\psi$ is a bijection ($\psi(a_1)=1$, $\psi(c_2)=2$, $\psi(b_1)=3$, etc). 
The second instance, $I'$, has an alphabet of size 3, $\Sigma'=\{a_2,b_2,c_2\}$, 
with $\psi'(b_2)= 2$, $\psi'(c_2)=4$, $\psi'(a_2)=6$.
\end{example}

\begin{property}\label{prop:tailleLT}
Let $I=\inst$ be a 3DT-instance of span $n$ with domain $L$. Then 
\begin{equation*}
\left\lvert\Sigma \right\rvert
=\left\lvert L\right\rvert
=3\left\lvert T\right\rvert \leq n.
\end{equation*}
\end{property}
\begin{proof}
We have $|\Sigma|=|L|$ since $\psi$ is an injection 
with image $L$. The triples of $T$ partition $\Sigma$
so $|\Sigma|=3|T|$, and finally $L\subseteq\intvl1n$ so 
$|L|\leq n$.
\end{proof}

\begin{defn}
Let $I=\inst$ be a 3DT-instance. The injection $\psi$ 
gives a total order over $\Sigma$, written $\pre_I$ 
(or $\pre$, if there is no ambiguity),  defined by
\begin{equation}
\forall \sigma_1,\sigma_2\in\Sigma,
\quad \sigma_1\pre_I \sigma_2
\Leftrightarrow \psi(\sigma_1)<\psi(\sigma_2)
\end{equation}

Two elements $\sigma_1$ and $\sigma_2$ of $\Sigma$ are called 
\emph{consecutive} if there exists no element 
$x\in \Sigma$ such that $\sigma_1\pre_I x\pre_I \sigma_2$. In this 
case, we write $\sigma_1\preCons_I \sigma_2$ 
(or simply $\sigma_1\preCons \sigma_2$).
\end{defn}

An equivalent definition is that $\sigma_1\pre \sigma_2$ if 
$\sigma_1\ \sigma_2$ is a subword of the word representation of
$I$. Also, $\sigma_1\preCons \sigma_2$ if the word representation 
of $I$ contains a factor of the kind 
$\sigma_1\ \w^*\  \sigma_2$ 
(where $\w^*$ represents any sequence of $l\geq 0$ dots).

Using the triples in $T$, we define a successor function over the 
domain $L$:

\begin{defn}
Let $I=\inst$ be a 3DT-instance with domain $L$. The function 
$\succI:L\rightarrow L$ is defined by:
\begin{align*}
\forall (a,b,c)\in T,\quad \psi(a)&\mapsto \psi(b) \\
\psi(b)&\mapsto \psi(c)\\
\psi(c)&\mapsto \psi(a)
\end{align*}
\end{defn}

Function $\succI$ is a bijection, with no fixed-points, and such that 
$\succI\circ\succI\circ\succI$ is the identity over $L$.
In Example~\ref{ex:3DTinstance}, we have:
\begin{equation*}
\succI=\left(
\begin{matrix} 1&2&3&4&5&6 \\  3&6&5&2&1&4\end{matrix}
\right) \text{\  and \ } \succIp=\left(
\begin{matrix} 2&4&6 \\  4&6&2 \end{matrix}
\right).
\end{equation*}


\subsection{3DT-steps}

\begin{defn}
Let $I=\inst$ be a 3DT-instance, and $(a,b,c)$ be a triple of $T$. 
Write $i=\min\{\psi(a),\psi(b),\psi(c)\}$, $j=\succI(i)$, and 
$k=\succI(j)$.
The triple $(a,b,c)\in T$ is \emph{well-ordered} if we have 
$i<j<k$. 
In such a case, we write $\tau[a,b,c,\psi]$ the transposition 
$\tau_{i,j,k}$.
\end{defn}

An equivalent definition is that $(a,b,c)$ is well-ordered iff 
one of $abc$, $bca$, $cab$ is a 
subword of the word representation of $I$. 
In Example~\ref{ex:3DTinstance}, $(a_1,b_1,c_1)$ is 
well-ordered in $I$: indeed, we have $i=\psi(a_1)$, $j=\psi(b_1)$ 
and $k=\psi(c_1)$, so $i<j<k$. The triple $(a_2,b_2,c_2)$ is also 
well-ordered in $I'$ ($i=\psi'(b_2)<j=\psi'(c_2)<k=\psi'(a_2)$), 
but not in $I$: 
$i=\psi(c_2)<k=\psi(b_2)<j=\psi(a_2)$. In this example, we have 
$\tau[a_1,b_1,c_1,\psi]=\tau_{1,3,5}$ and 
$\tau[a_2,b_2,c_2,\psi']=\tau_{2,4,6}$. 


\begin{defn}[3DT-step]
Let $I=\inst$ be a 3DT-instance with $(a,b,c)\in T$ a well-ordered triple.  
The \emph{3DT-step} of parameter $(a,b,c)$ is the operation written $\DT{a,b,c}$,
transforming $I$ into the 3DT-instance $I'=\instp$ such that:
\begin{itemize}
\item $\Sigma'=\Sigma-\{a,b,c\}$
\item $T'=T-\{(a,b,c)\}$
\item $\psi':\begin{array}{ccc}\Sigma' &\rightarrow & \intvl{1}{n}\\\sigma&\mapsto& \tau^{-1}(\psi(\sigma))\end{array}$ $($with $\tau=\tau[a,b,c,\psi])$.
\end{itemize}
\end{defn}

\begin{figure}\begin{center}
\includegraphics{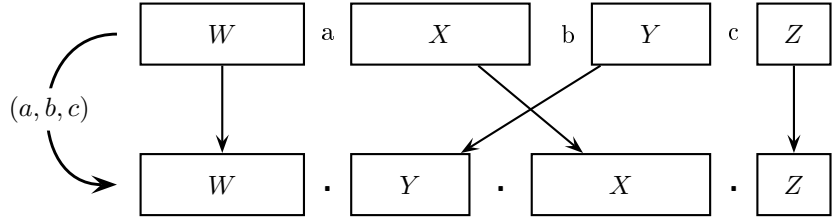}
\caption{\label{fig:3DT_step} The 3DT-step $\DT{a,b,c}$ has two effects, 
here represented on the word representation of a 3DT-instance: 
the triple $(a,b,c)$ is deleted
(and replaced by dots in this word representation), 
and the factors $X$ and $Y$ are swapped.}
\end{center}
\end{figure}

A 3DT-step has two effects on a 3DT-instance, as represented in Figure~\ref{fig:3DT_step}. 
The first is to remove a necessarily
well-ordered triple from $T$ (hence from $\Sigma$). The second is, by applying a 
transposition to $\psi$, to shift the position of some of the remaining elements.
Note that a triple that is not well-ordered in $I$ can become well-ordered in $I'$, 
or vice-versa. In Example~\ref{ex:3DTinstance}, $I'$ can be obtained 
from $I$ via a 3DT-step: $I\DT{a_1,b_1,c_1} I'$. Moreover, $I'\DT{a_2,b_2,c_2} \emptyinst$. A more complex example is given 
in Figure~\ref{fig:example_var}. 

Note that a 3DT-step transforms the function $\succI$ into $\succIp=\tau^{-1}\circ\succI \circ \tau$, 
restricted to~$L'$, the domain of the new instance~$I'$. Indeed, for all $(a,b,c)\in T'$, we have 
\begin{align*}
\succIp(\psi'(a))&=\psi'(b)\\
&=\tau^{-1}(\psi(b))\\
&=\tau^{-1}(\succI(\psi(a)))\\
&=\tau^{-1}(\succI(\tau(\psi'(a))))\\
&=(\tau^{-1}\circ\succI\circ\tau)(\psi'(a))
\end{align*}
The computation is similar for $\psi'(b)$ and $\psi'(c)$.

\begin{figure}\begin{center}
\includegraphics{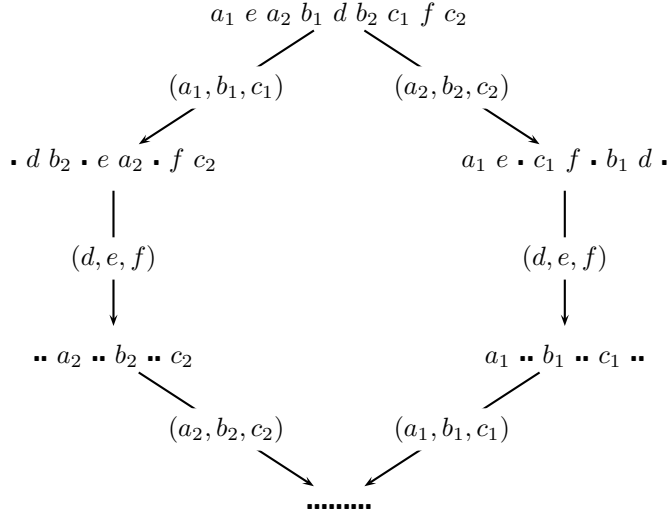}
\caption{\label{fig:example_var} Possible 3DT-steps from the instance $I$ defined 
by the word $a_1\ e\ a_2\ b_1\ d\ b_2\ c_1\ f\ c_2$ and the set of triples 
$T=\{(a_1,b_1,c_1),(a_2,b_2,c_2),(d,e,f)\}$. We can see that there
is a path from $I$ to $\emptyinst$, hence $I$ is 3DT-collapsible.
Note that both $(a_1,b_1,c_1)$ 
and $(a_2,b_2,c_2)$ are well-ordered in the initial instance, each one loses this
property after applying the 3DT-step associated to the other, and becomes well-ordered 
again after applying the 3DT-step associated to $(d,e,f)$. }
\end{center}
\end{figure}

\begin{defn}[3DT-collapsibility]
A 3DT-instance $I=\inst$ is \emph{3DT-collapsible} if there exists a sequence of 3DT-instances $I_k,I_{k-1}, \ldots, I_0$ such that
\begin{align*}
&I_k=I\\
&\forall i\in\intvl1k,\quad \exists(a,b,c)\in T, \quad I_i \DT{a,b,c} I_{i-1}\\
&I_0=\emptyinst
\end{align*}
\end{defn}

In Example~\ref{ex:3DTinstance}, $I$ and $I'$ are 3DT-collapsible, since $I\DT{a_1,b_1,c_1}I'\DT{a_2,b_2,c_2}\emptyinst$.
Another example is the 3DT-instance defined in Figure~\ref{fig:example_var}. 
Note that in the example of Figure~\ref{fig:example_var}, there are in fact two distinct paths leading to
the empty instance.


\subsection{Equivalence with the transposition distance}
\begin{defn}
\label{def:equiv}
Let $I=\inst$ be a 3DT-instance of span $n$ with domain $L$, and $\pi$ be a permutation of $\intvl{0}{n}$.
 We say that $I$ and $\pi$ are equivalent, and we write $I \sim \pi$, if: \addtolength{\arraycolsep}{-4pt}
$$\begin{array}{lrl}
&\pi(0)&=0, \\
\forall v\in \intvl{1}{n}-\L,\quad & \pi(v)&=\pi(v-1)+1, \\
\forall v\in \L,&\pi(v)&=\pi(\succI^{-1}(v)-1)+1.
\end{array}$$\addtolength{\arraycolsep}{4pt}
\end{defn}

With such an equivalence $I\sim \pi$, the two following properties
hold:
\begin{itemize}
\item The breakpoints of $\pi$ correspond to the elements of $L$ (see Property~\ref{prop:3kbreakpoints}).
\item The triples of breakpoints that may be resolved immediately by 
a single transposition correspond to the 
well-ordered triples of $T$
(see Figure~\ref{fig:equiv2} and Lemma~\ref{lem:forceTransposition}). 
\end{itemize}

\begin{figure}\begin{center}
\includegraphics{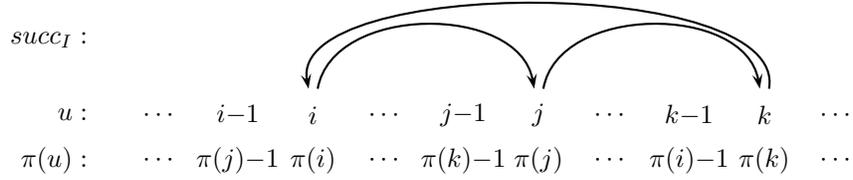}
\caption{\label{fig:equiv2} Illustration of the equivalence $I\sim \pi$ on three integers $(i,j,k)$ such that 
$j=\succI(i)$ and $k=\succI(j)$. It can be checked that $\pi(v)=\pi(u-1)+1$ for 
any $(u,v)\in\{(i,j),(j,k),(k,i)\}$.}
\end{center}
\end{figure}


\begin{property}\label{prop:3kbreakpoints}
Let $I=\inst$ be a 3DT-instance of span $n$ with domain $L$, and $\pi$ be a permutation of $\intvl{0}{n}$, such that $I \sim \pi$.
Then the number of breakpoints of $\pi$ is $d_b(\pi)=\sL=3|T|$.
\end{property}

\begin{proof}
Let $v\in\intvl1n$. By Definition~\ref{def:equiv}, we have:

If $v\notin L$, then $\pi(v)=\pi(v-1)+1$, so $(v-1,v)$ is an adjacency of $\pi$.

If $v\in L$, we write $u=\succI^{-1}(v)$, so $\pi(v)=\pi(u-1)+1$. Since $\succI$ has no fixed-point,
we have $u\neq v$, which implies $\pi(u-1)\neq\pi(v-1)$. Hence, $\pi(v)\neq\pi(v-1)+1$, and $(v-1,v)$ is a 
breakpoint of $\pi$.

Consequently the number of breakpoints of $\pi$ is exactly $\sL$, and $\sL=3|T|$ by Property~\ref{prop:tailleLT}.
\end{proof}

\begin{figure}\begin{center}
\includegraphics{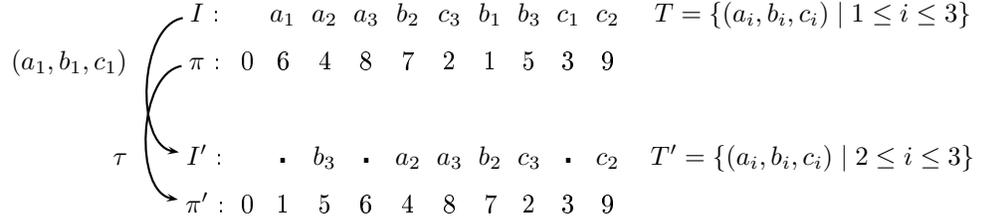}
\caption{\label{fig:ex2c} Illustration of Lemma~\ref{lem:keepEquivalence}: 
since $I \sim \pi$ and $I \DT{a_1,b_1,c_1} I'$, then 
$I' \sim \pi'=\pi\circ \tau$, where $\tau=\tau[a_1,b_1,c_1,\psi]$.}
\end{center}
\end{figure}

With the following lemma, we show that the equivalence between a 
3DT-instance and a permutation is preserved after a 3DT-step, see Figure~\ref{fig:ex2c}.
\begin{lem} \label{lem:keepEquivalence}
Let $I=\inst$ be a 3DT-instance of span $n$, and $\pi$ be 
a permutation of $\intvl{0}{n}$, such that $I \sim \pi$.
If there exists a 3DT-step $I \DT{a,b,c} I'$, then $I'$ 
and $\pi'=\pi\circ \tau$, where $\tau=\tau[a,b,c,\psi]$, are equivalent.
\end{lem}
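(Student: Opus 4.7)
The plan is to verify directly the three clauses of Definition~\ref{def:equiv} for the pair $(I',\pi')$, using the identity $\pi'=\pi\circ\tau$ together with three ingredients already established: (a) $L'=\tau^{-1}(L\setminus\{i,j,k\})$, where $i<j<k$ are the positions of $a,b,c$ under $\psi$; (b) $\succIp=\tau^{-1}\circ\succI\circ\tau$ on $L'$ (proved just before the lemma); and (c) Property~\ref{prop:tauXmoinsUn}, which tells us when $\tau(v-1)=\tau(v)-1$ and what happens at the three exceptional indices $i$, $q(j)$, $k$. The first clause, $\pi'(0)=0$, is immediate since $0<i$ gives $\tau(0)=0$.

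For the second clause, I would fix $v\in\intvl{1}{n}\setminus L'$ and split on whether $\tau(v)\in L$ or not. If $\tau(v)\notin L$, then by~(a) we have $v\notin\{i,q(j),k\}$, hence Property~\ref{prop:tauXmoinsUn} gives $\tau(v-1)=\tau(v)-1$, and combining with $\pi(\tau(v))=\pi(\tau(v)-1)+1$ (from $I\sim\pi$ applied at $\tau(v)$) yields $\pi'(v)=\pi'(v-1)+1$. Otherwise $\tau(v)\in\{i,j,k\}$, which forces $v\in\{q(j),i,k\}$ respectively, and each of the three sub-cases is handled by applying $I\sim\pi$ together with the corresponding special identity of Property~\ref{prop:tauXmoinsUn}: for instance $v=q(j)$ gives $\pi'(v)=\pi(i)=\pi(\succI^{-1}(i)-1)+1=\pi(k-1)+1=\pi(\tau(q(j)-1))+1=\pi'(v-1)+1$, using $\succI^{-1}(i)=k$ because $(a,b,c)$ is well-ordered and hence $\succI$ cycles $i\to j\to k\to i$. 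The sub-cases $v=i$ and $v=k$ are analogous with $\succI^{-1}(j)=i$ and $\succI^{-1}(k)=j$.

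For the third clause, I would pick $v\in L'$ and set $w=\succIp^{-1}(v)$. By~(b), $\tau(w)=\succI^{-1}(\tau(v))$, and applying $I\sim\pi$ at $\tau(v)\in L$ gives $\pi(\tau(v))=\pi(\tau(w)-1)+1$. The only remaining task is to replace $\tau(w)-1$ by $\tau(w-1)$; this is valid as long as $w\notin\{i,q(j),k\}$, and indeed $w\in L'$ forces $\tau(w)\in L\setminus\{i,j,k\}$ by~(a), ruling out each of the three forbidden values as checked by the direct formula for $\tau$. The main obstacle is purely bookkeeping: keeping straight the three indices $\{i,q(j),k\}$ appearing in Property~\ref{prop:tauXmoinsUn} versus the three indices $\{i,j,k\}$ deleted from $L$, and matching the exceptional identities correctly with the three-element cycle of $\succI$ on $\{i,j,k\}$. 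No deeper idea is required beyond these case distinctions.
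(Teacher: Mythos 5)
Your proposal is correct and follows essentially the same route as the paper's proof: a direct verification of the three clauses of Definition~\ref{def:equiv} using $L'=\tau^{-1}(L\setminus\{i,j,k\})$, the conjugation $\succIp=\tau^{-1}\circ\succI\circ\tau$, and Property~\ref{prop:tauXmoinsUn} at the exceptional indices. The only cosmetic difference is that you enumerate the three exceptional sub-cases $v\in\{i,q(j),k\}$ explicitly where the paper treats them uniformly via $u=\succI^{-1}(v)$.
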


\begin{proof}
We write $(i,j,k)$ the indices such that $\tau=\tau_{i,j,k}$ 
(i.e. $i=\min\{\psi(a),\psi(b),\psi(c)\}$, $j=\succI(i)$, $k=\succI(j)$). 
Since $(a,b,c)$ is well-ordered, we have $i<j<k$.

We have $I'=\instp$, with $\Sigma'=\Sigma-\{a,b,c\}$, $T'=T-\{(a,b,c)\}$, and $\psi':\sigma\mapsto \tau^{-1}(\psi(\sigma))$. We write respectively $L$ and $L'$ the domains of $I$ and $I'$. For all $v'\in\intvl1n$, we have 
\begin{eqnarray*}
 v'\in L' &\Leftrightarrow& \exists \sigma\in\Sigma-\{a,b,c\},\ v'=\tau^{-1}(\psi(\sigma))\\
 &\Leftrightarrow& \tau(v')\in L-\{i,j,k\}
\end{eqnarray*}
We prove the 3 required properties (see Definition~\ref{def:equiv}) sequentially:
\begin{itemize}
\item $\pi'(0)=\pi(\tau(0))=\pi(0)=0$,

\item $\forall v'\in \intvl{1}{n}-L'$, let $v=\tau(v')$. 
Since $v'\notin L'$, we have either $v\in\{i,j,k\}$, or $v\notin L$.
In the first case, we write $u=\succI^{-1}(v)$ (then $u\in\{i,j,k\}$).  
By Property~\ref{prop:tauXmoinsUn}, $\tau^{-1}(u-1)$ is equal to $\tau^{-1}(\succI(u))-1$, so $\tau^{-1}(u-1)=\tau^{-1}(v)-1$. Hence, 
\begin{align*}
\pi'(v'-1)+1&=\pi(\tau(\tau^{-1}(v)-1))+1\\
&=\pi(u-1)+1\\
&=\pi(v)\text{ by Def.~\ref{def:equiv}, since } v\in L\text{ and }v=\succI(u)\\
&=\pi'(v')
\end{align*}
In the second case, $v\notin\L$, we have 
\begin{align*}
\pi'(v'-1)+1&=\pi(\tau(\tau^{-1}(v)-1))+1\\
&=\pi(\tau(\tau^{-1}(v-1)))+1\text{ by Prop.~\ref{prop:tauXmoinsUn}, since } v\notin\{i,j,k\}\\
&=\pi(v-1)+1\\
&=\pi(v)\text{ by Def.~\ref{def:equiv}, since } v\notin\L\\
&=\pi'(v')
\end{align*}
In both cases, we indeed have $\pi'(v'-1)+1=\pi'(v')$.

\item Let $v'$ be an element of $\Lp$. We write $v=\tau(v')$, $u=\succI^{-1}(v)$, and $u'=\tau^{-1}(u)$.
Then $v'=\tau^{-1}(\succI (\tau(u')))=\succIp(u')$. Moreover, $v\notin\{i,j,k\}$, hence $u\notin\{i,j,k\}$.
\begin{align*}
\pi'(u'-1)+1&=\pi(\tau(\tau^{-1}(u)-1))+1\\
&=\pi(\tau(\tau^{-1}(u-1)))+1\text{ by Prop.~\ref{prop:tauXmoinsUn}, since } u\notin\{i,j,k\}\\
&=\pi(u-1)+1\\
&=\pi(v)\text{ by Def.~\ref{def:equiv}, since } v\in\L\text{ and }u=\succI^{-1}(v)\\
&=\pi(\tau(\tau^{-1}(v)))\\
&=\pi'(v')
\end{align*}
\end{itemize}
\end{proof}

\begin{lem} \label{lem:forceTransposition}
Let $I=\inst$ be a 3DT-instance of span $n$, and $\pi$ a permutation of $\intvl{0}{n}$, such that $I \sim \pi$.
If there exists a transposition $\tau=\tau_{i,j,k}$ such that $d_b(\pi\circ\tau)=d_b(\pi)-3$, then $T$ contains 
a well-ordered triple $(a,b,c)$ such that $\tau=\tau[a,b,c,\psi]$.
\end{lem}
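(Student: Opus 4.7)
The plan is to show that a breakpoint-reducing transposition is forced to match a well-ordered triple of $T$ via $\psi$. The backbone is a counting argument using Property~\ref{prop:td-bd} followed by unpacking what an ``adjacency'' at the special positions means through the equivalence $I\sim\pi$.

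First I would set up the bookkeeping. Let $A=\{x\in\intvl{1}{n}\mid (x-1,x) \text{ adj.\ of }\pi\}$ and $B=\{x\in\intvl{1}{n}\mid (x-1,x) \text{ adj.\ of }\pi\circ\tau\}$. Writing $q(j)=k+i-j$, the bijection $x\mapsto\tau^{-1}(x)$ sends $\intvl{1}{n}-\{i,j,k\}$ to $\intvl{1}{n}-\{i,q(j),k\}$, and by Property~\ref{prop:td-bd} it restricts to a bijection between $A\cap(\intvl{1}{n}-\{i,j,k\})$ and $B\cap(\intvl{1}{n}-\{i,q(j),k\})$. Since $d_b(\pi)=n-|A|$ and $d_b(\pi\circ\tau)=n-|B|$, the hypothesis $d_b(\pi\circ\tau)=d_b(\pi)-3$ translates to $|B\cap\{i,q(j),k\}|-|A\cap\{i,j,k\}|=3$. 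Both quantities lie in $\{0,1,2,3\}$, hence $A\cap\{i,j,k\}=\emptyset$ and $\{i,q(j),k\}\subseteq B$. In words: $i,j,k$ are all breakpoints of $\pi$, and $i,q(j),k$ are all adjacencies of $\pi\circ\tau$.

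Next, by Property~\ref{prop:3kbreakpoints} (more precisely its proof via Definition~\ref{def:equiv}), the breakpoints of $\pi$ coincide with the elements of $L$, so $i,j,k\in L$. Set $a=\psi^{-1}(i)$, $b=\psi^{-1}(j)$, $c=\psi^{-1}(k)$. The goal is to show $(a,b,c)\in T$ is well-ordered and $\tau=\tau[a,b,c,\psi]$. For the second step, I would translate each of the three adjacencies of $\pi\circ\tau$ into an equation about $\succI$. Using Property~\ref{prop:tauXmoinsUn}: the adjacency at $i$ in $\pi\circ\tau$ reads $\pi\circ\tau(i-1)=\pi\circ\tau(i)-1$; but $\tau(i-1)=\tau(q(j))-1=i-1$ and $\tau(i)=j$, so this becomes $\pi(i-1)=\pi(j)-1$. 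Since $j\in L$, Definition~\ref{def:equiv} gives $\pi(j)=\pi(\succI^{-1}(j)-1)+1$, and $\pi$ being a permutation forces $\succI^{-1}(j)=i$, i.e.\ $\succI(i)=j$. The analogous computations at $q(j)$ and $k$ (using $\tau(q(j)-1)=k-1$, $\tau(q(j))=i$, $\tau(k-1)=j-1$, $\tau(k)=k$) yield $\succI(k)=i$ and $\succI(j)=k$.

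Finally, $\succI(i)=j$, $\succI(j)=k$, $\succI(k)=i$ is exactly the cyclic successor pattern of a triple of $T$ under $\psi$; hence $\{a,b,c\}$ is a triple of $T$, ordered as $(a,b,c)$. Since $i<j<k$ and $i=\min\{\psi(a),\psi(b),\psi(c)\}$ with $j=\succI(i)$, $k=\succI(j)$, the triple $(a,b,c)$ is well-ordered and $\tau[a,b,c,\psi]=\tau_{i,j,k}=\tau$, which concludes the proof.

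The step I expect to require the most care is the counting argument in the first paragraph: one must check that the Property~\ref{prop:td-bd} bijection really matches the complements $\{i,j,k\}$ and $\{i,q(j),k\}$ on the nose, so that the equality $|B\cap\{i,q(j),k\}|-|A\cap\{i,j,k\}|=3$ is forced rather than just bounded. Once that is nailed down, the rest is a methodical chase through Definitions~\ref{def:equiv} and Property~\ref{prop:tauXmoinsUn}.
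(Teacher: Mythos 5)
Your proof is correct and follows essentially the same route as the paper's: both use Property~\ref{prop:td-bd} to force $(i-1,i),(j-1,j),(k-1,k)$ to be breakpoints of $\pi$ and their images under $\tau^{-1}$ (your $\{i,q(j),k\}$, the paper's $\{i',j',k'\}$) to be adjacencies of $\pi\circ\tau$, and then chase Property~\ref{prop:tauXmoinsUn} and Definition~\ref{def:equiv} to extract $\succI(i)=j$, $\succI(j)=k$, $\succI(k)=i$. Your explicit counting argument and your upfront derivation of $i,j,k\in L$ from the breakpoint characterization are only minor stylistic variants of what the paper does implicitly.
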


\begin{proof}
We write $i'=\tau^{-1}(i)$, $j'=\tau^{-1}(j)$, and $k'=\tau^{-1}(k)$. Note that $i<j<k$.

Let $\pi'=\pi\circ\tau$. For all $x\in\intvl{1}{n}-\{i,j,k\}$,  we have, by Property~\ref{prop:td-bd}, that
$(x-1,x)$ is an adjacency of $\pi$ iff $(\tau^{-1}(x)-1,\tau^{-1}(x))$ is an adjacency of $\pi'$.
Hence, since $d_b(\pi')=d_b(\pi)-3$, 
we necessarily have that $(i-1,i)$, $(j-1,j)$ and $(k-1,k)$ are 
breakpoints of $\pi$, and $(i'-1,i')$, $(j'-1,j')$ and $(k'-1,k')$ are 
adjacencies of $\pi'$. We have
\begin{align*}
\pi(i)  &=\pi(\tau(i'))\\
	&=\pi'(i')\\
	&=\pi'(i'-1)+1\text{ since $(i'-1,i')$ is an adjacency of $\pi'$}\\
	&=\pi'(\tau^{-1}(i)-1)+1\\
	&=\pi'(\tau^{-1}(k-1))+1\text{ by Prop.~\ref{prop:tauXmoinsUn}}\\
	&=\pi(k-1)+1\\
\end{align*}
Since $I \sim\pi$ and $i\neq k$, by Definition~\ref{def:equiv}, 
we necessarily have $i\in L$ (where $L$ is the domain of $I$), and $i=\succI(k)$.
	
Using the same method with $(j'-1,j')$ and $(k'-1,k')$, we obtain $j,k\in L$,  $j=\succI(i)$ and $k=\succI(j)$. 
Hence, $T$ contains one of the following three triples: $(\psi^{-1}(i),\psi^{-1}(j),\psi^{-1}(k))$, $(\psi^{-1}(j),\psi^{-1}(k),\psi^{-1}(i))$ or $(\psi^{-1}(k),\psi^{-1}(i),\psi^{-1}(j))$. Writing $(a,b,c)$ this triple, we indeed have $\tau_{i,j,k}=\tau[a,b,c,\psi]$ since $i<j<k$.
\end{proof}

\begin{thm}  \label{thm:useEquiv}
Let $I=\inst$ be a 3DT-instance of span $n$ with domain $L$, and $\pi$ be a permutation of $\intvl{0}{n}$, such that $I \sim \pi$.
Then $I$ is 3DT-collapsible if and only if $d_t(\pi)=|T|=d_b(\pi)/3$.
\end{thm}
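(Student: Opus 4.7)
The plan is to prove both directions by induction on $|T|$, using the two preceding lemmas to translate between 3DT-steps on $I$ and breakpoint-saving transpositions on $\pi$. The equality $d_b(\pi)/3 = |T|$ is already given unconditionally by Property~\ref{prop:3kbreakpoints}, and Property~\ref{prop:td-bd2} supplies the inequality $d_t(\pi) \geq d_b(\pi)/3 = |T|$ for free. Thus the theorem collapses to the biconditional: $I$ is 3DT-collapsible if and only if $d_t(\pi) \leq |T|$.

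\textbf{Forward direction.} Assume $I$ is 3DT-collapsible, so there exists a sequence of 3DT-steps $I = I_{|T|} \DT{} I_{|T|-1} \DT{} \cdots \DT{} I_0 = \emptyinst$. Starting from $I_{|T|} \sim \pi$ and iterating Lemma~\ref{lem:keepEquivalence} along this sequence produces transpositions $\tau^{(1)}, \ldots, \tau^{(|T|)}$ (one per 3DT-step, each obtained via $\tau[a,b,c,\psi]$ using the current $\psi$) such that $I_{|T|-m} \sim \pi \circ \tau^{(1)} \circ \cdots \circ \tau^{(m)}$. In particular, $\emptyinst \sim \pi \circ \tau^{(1)} \circ \cdots \circ \tau^{(|T|)}$. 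The empty instance has empty domain, so Definition~\ref{def:equiv} forces the equivalent permutation to be $Id_n$. Hence $d_t(\pi) \leq |T|$.

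\textbf{Reverse direction.} Induct on $|T|$. The base case $|T| = 0$ means $I = \emptyinst$, which is trivially 3DT-collapsible (empty sequence). For the inductive step, fix an optimal sorting sequence of $|T|$ transpositions for $\pi$, and let $\tau$ be the first one applied. Then $\pi \circ \tau$ is sortable in $|T|-1$ transpositions, so by Property~\ref{prop:td-bd2} we have $d_b(\pi \circ \tau) \leq 3(|T|-1)$. On the other hand, Property~\ref{prop:td-bd} gives $d_b(\pi \circ \tau) \geq d_b(\pi) - 3 = 3|T| - 3$. Hence $d_b(\pi \circ \tau) = d_b(\pi) - 3$, so Lemma~\ref{lem:forceTransposition} supplies a well-ordered triple $(a,b,c) \in T$ with $\tau = \tau[a,b,c,\psi]$, yielding a valid 3DT-step $I \DT{a,b,c} I'$. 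Lemma~\ref{lem:keepEquivalence} gives $I' \sim \pi \circ \tau$, with $|T'| = |T| - 1$ and $d_t(\pi \circ \tau) = |T| - 1$, so the inductive hypothesis applies to $I'$. Thus $I'$ is 3DT-collapsible, and prepending the step $I \DT{a,b,c} I'$ shows that $I$ is too.

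The argument is almost mechanical once Lemmas~\ref{lem:keepEquivalence} and \ref{lem:forceTransposition} are available; the only real care needed is bookkeeping the equivalence through the induction and observing that the empty 3DT-instance is equivalent only to $Id_n$. The one substantive check is the squeeze argument $d_b(\pi) - 3 \leq d_b(\pi \circ \tau) \leq 3(|T|-1)$, which is what forces the first optimal transposition to remove exactly three breakpoints and thereby unlocks Lemma~\ref{lem:forceTransposition}.
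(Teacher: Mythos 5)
Your proof is correct and follows essentially the same route as the paper: the same induction on $|T|$, the same use of Lemma~\ref{lem:forceTransposition} via the squeeze $d_b(\pi)-3\leq d_b(\pi\circ\tau)\leq 3(|T|-1)$ on the first optimal transposition, and the same use of Lemma~\ref{lem:keepEquivalence} to carry the equivalence through each step. The only (harmless) presentational difference is that you handle the forward direction by direct iteration to $\emptyinst\sim Id_n$ and reduce the statement up front to $d_t(\pi)\leq|T|$, whereas the paper folds both directions into one induction.
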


\begin{proof}
We prove the theorem by induction on $k=|T|$.
For $k=0$, necessarily $I=\emptyinst$ and $L=T=\emptyset$, and by Definition~\ref{def:equiv}, $\pi=Id_n$ 
($\pi(0)=0$, and for all $v>0$, $\pi(v)=\pi(v-1)+1$). 
In this case, $I$ is trivially 3DT-collapsible, and $d_t(\pi)=0=|T|=d_b(\pi)/3$.

Suppose now $k=k'+1$, with $k'\geq 0$, and the theorem is true for $k'$. 
By Property~\ref{prop:3kbreakpoints}, we have $d_b(\pi)=3k$, 
and by Property~\ref{prop:td-bd2}, $d_t(\pi)\geq 3k/3=k$.

Assume first that $I$ is 3DT-collapsible. 
Then there exist both a triple $(a,b,c)\in T$ and 
a 3DT-instance $I'=\instp$ such that 
$I\DT{a,b,c} I'$, and that $I'$ is 3DT-collapsible. 
Since $T'=T-\{(a,b,c)\}$, the size of $T'$ is $k-1=k'$. 
By Lemma~\ref{lem:keepEquivalence}, we have  
$I'\sim \pi'=\pi\circ\tau$, with $\tau=\tau[a,b,c,\psi]$.
Using the induction hypothesis, we know that $d_t(\pi')=k'$. 
So the transposition distance from $\pi=\pi'\circ\tau^{-1}$ 
to the identity is at most, hence exactly, $k'+1=k$.

Assume now that $d_t(\pi)=k$. 
We can decompose $\pi$ into $\pi=\pi'\circ\tau^{-1}$, where $\tau$ 
is a transposition and $\pi'$ a permutation such that $d_t(\pi')=k-1=k'$.
Since $\pi$ has $3k$ breakpoints (Property~\ref{prop:3kbreakpoints}), 
and $\pi'=\pi\circ\tau$ has at most $3k-3$ breakpoints (Property~\ref{prop:td-bd2}), 
$\tau$ necessarily removes $3$ breakpoints, and we can use 
Lemma~\ref{lem:forceTransposition}: there exists a 3DT-step $I\DT{a,b,c}I'$,
where $(a,b,c)\in T$ is a well-ordered triple and $\tau=\tau[a,b,c,\psi]$.
We can now use Lemma~\ref{lem:keepEquivalence}, which yields $I'\sim \pi'=\pi\circ \tau$. 
Using the induction hypothesis, we obtain that 
$I'$ is 3DT-collapsible, hence $I$ is also 3DT-collapsible. This concludes 
the proof of the theorem.
\end{proof}

The previous theorem gives a way to reduce the problem of deciding if a 3DT-instance
is collapsible to the \textsc{Sorting by Transpositions} problem. However, it must be used carefully,
since there exist 3DT-instances to which no permutation is equivalent 
(for example, $I=a_1\ a_2\ b_1\ b_2\ c_1\ c_2$ admits no permutation $\pi$ of $\intvl06$ such that $I\sim\pi$).



\newpage

\section{3DT-collapsibility is NP-Hard to Decide}
\label{sect:3DTisNP-hard}

In this section, we define, for any boolean formula $\phi$, 
a corresponding 3DT-instance $I_\phi$. We also prove that $I_\phi$ is 
3DT-collapsible if and only if $\phi$ is satisfiable. 



\subsection{Block Structure}

The construction of the 3DT-instance $I_\phi$ uses a decomposition 
into blocks, defined below. 
Some triples will be included in a block, in order to define its behavior, 
while others will be shared between two blocks, in order to pass information. 
The former are unconstrained, so that we can design blocks with the behavior 
we need (for example, blocks mimicking usual boolean functions), 
while the latter need to follow several rules, so that the blocks 
can conveniently 
be arranged together.
%
%

\begin{defn}[$l$-block-decomposition]
An \emph{$l$-block-decomposition} $\BB$ of a 3DT-instance $I$ of span $n$ is an $l$-tuple
$(s_1,\ldots,s_l)$ such that $s_1=0$, for all $h\in\intvl{1}{l-1}$, $s_{h}< s_{h+1}$ and $s_l<n$.
We write $t_h=s_{h+1}$ for $h\in\intvl{1}{l-1}$, and $t_l=n$.

Let $I=\inst$. For $h\in\intvl1l$, the factor $u_{s_h+1}\ u_{s_h+2}\ \ldots u_{t_h}$ of the 
word representation $u_1\ u_2\ \ldots u_n$ of $I$ is called
the \emph{full block} $\Bdot h$ (it is a word over $\Sigma\cup\{\w\}$).
The subword of $\Bdot h$ where every occurrence of $\w$ is deleted is called the \emph{block} $\B h$.

For $\sigma\in \Sigma$, we write $\block(\sigma)=h$ if $\psi(\sigma)\in\intvl{s_h+1}{t_h}$ (equivalently, if $\sigma$ appears in the word $\B h$).
A triple $(a,b,c)\in T$ is 
said to be \emph{internal} if $\block(a)=\block(b)=\block(c)$, \emph{external} otherwise.

If $\tau$ is a transposition such that  
for all $h\in\intvl{1}{l}$, $\tau(s_h)<\tau(t_h)$, 
we write $\tau[\BB]$ the $l$-block-decomposition $(\tau(s_1),\ldots,\tau(s_l))$.
\end{defn}

In the rest of this section, we mostly work with blocks instead of full blocks, since we are 
only interested in the relative order of the elements, rather than their actual position. Full
blocks are only used in definitions, where we want to control the dots in the word representation
of the 3DT-instances we define. Note that, for $\sigma_1,\sigma_2\in\Sigma$ 
such that $\block(\sigma_1)=\block(\sigma_2)=h$,
the relation $\sigma_1 \preCons\sigma_2$ is equivalent to \emph{$\sigma_1\ \sigma_2$ is a factor of $\B h$}.

\begin{property}
Let $\BB=(s_1,\ldots,s_l)$ be an $l$-block-decomposition of a 3DT-instance of span $n$, and $i,j,k\in\intvl1n$ be three integers such that 
(a) $i<j<k$ and 
(b) $\exists {h_0}$ such that
$s_{h_0}<i<j\leq t_{h_0}$ or $s_{h_0}<j<k\leq t_{h_0}$ (or both). 
Then for all $h\in\intvl{1}{l}$, $\tau_{i,j,k}^{-1}(s_{h})<\tau_{i,j,k}^{-1}(t_{h})$, 
and the $l$-block-decomposition $\tau_{i,j,k}^{-1}[\BB]$ is defined.
\end{property}
\begin{proof}
For any $h\in\intvl1l$, we show that we cannot have $i\leq s_{h}<j\leq t_{h}<k$. 
Indeed, $s_h<j$ implies $h\leq h_0$ (since $s_h<j\leq t_{h_0}=s_{h_0+1}$),
and $j\leq t_h$ implies $h\geq h_0$ (since $t_{h_0-1}=s_{h_0}<j\leq t_h$). 
Hence $s_{h}<j\leq t_{h}$ implies $h=h_0$, but 
$i\leq s_{h}, t_{h}<k$ contradicts both conditions 
$s_{h_0}<i$ and $k\leq t_{h_0}$: hence
the relation $i\leq s_{h}<j\leq t_{h}<k$ is impossible.

By Property~\ref{prop:echangeTau}, since $s_h<t_h$ for all $h\in \intvl1l$, 
and  $i\leq s_{h}<j\leq t_{h}<k$ does not hold, 
we have $\tau_{i,j,k}^{-1}(s_{h})<\tau_{i,j,k}^{-1}(t_{h})$,
which is sufficient to define $\tau_{i,j,k}^{-1}[\BB]$.
\end{proof}

The above property yields that, if $(a,b,c)$ is a well-ordered
triple of a 3DT-instance $I=\inst$ ($\tau=\tau[a,b,c,\psi]$), 
and $\BB$ is an $l$-block-decomposition of $I$,
then $\tau^{-1}[\BB]$ is defined if
$(a,b,c)$ is an internal triple, 
or an external triple such that one of the following equalities is satisfied:
$\block(a)=\block(b)$, $\block(b)=\block(c)$ or $\block(c)=\block(a)$.
In this case, the 3DT-step $I\DT{a,b,c}I'$ is written 
$(I,\BB)\DT{a,b,c}(I',\BB')$, where $\BB'=\tau^{-1}[\BB]$ is an $l$-block-decomposition of $I'$.

\begin{defn}[Variable]\label{def:validVar}
A \emph{variable} $A$ of a 3DT-instance $I=\inst$ is a pair of triples $A=[(a,b,c),(x,y,z)]$ of $T$.
It is \emph{valid} in an $l$-block-decomposition $\BB$ if 
\begin{itemize}
\item[\Cii]  $\exists h_0\in\intvl1l$ such that $\block(b)=\block(x)=\block(y) =h_0$ 
\item[\Ciii]  $\exists h_1\in\intvl1l$, $h_1\neq h_0$, such that $\block(a)=\block(c)=\block(z) =h_1$ 
\item[\Civ]  if $x\pre y$, then we have $x\preCons b\preCons y$
\item[\Cv]   $a\pre z\pre c$
\end{itemize}

For such a valid variable $A$, 
the block $\B {h_0}$ containing $\{b,x,y\}$ is called the \emph{source} of $A$ (we write $source(A)={h_0}$), 
and the block $\B {h_1}$ containing $\{a,c,z\}$ is called the \emph{target} of $A$ (we write $target(A)={h_1}$). 
For $h\in\intvl1l$, the variables of which $\B h$ is the source (resp. the target) 
are called the \emph{output} (resp. the \emph{input}) of $\B h$.
The 3DT-step $I\DT{x,y,z}I'$ is called the \emph{activation} of $A$
(it requires that $(x,y,z)$ is well-ordered).
\end{defn}

Note that since a valid variable $A=[(a,b,c),(x,y,z)]$ has $\block(x)=\block(y)$,
its activation can be written $(I,\BB)\DT{x,y,z}(I',\BB')$. 

\begin{property} \label{prop:onlyXYZ}
Let $(I,\BB)$ be a 3DT-instance with an $l$-block-decomposition,
and $A$ be a variable of $I$ that is valid in $\BB$. Write $A=[(a,b,c), (x,y,z)]$.
Then $(x,y,z)$ is well-ordered iff $x\pre y$; and $(a,b,c)$ is not well-ordered.
\end{property}

\begin{proof}
Note that for all $\sigma,\sigma'\in \Sigma$, $\block(\sigma)<\block(\sigma')\Rightarrow \sigma\pre\sigma'$.
Write $I=\inst$, ${h_0}=source(A)$ and ${h_1}=target(A)$: we have $h_0\neq h_1$ by condition \Ciii\ of Definition~\ref{def:validVar}.

If ${h_0}<{h_1}$, then, with condition~\Cv\ of Definition~\ref{def:validVar}, $b\pre a\pre c$, and either $x\pre y\pre z$ or $y\pre x\pre z$. 
Hence, $(a,b,c)$ is not well-ordered, and $(x,y,z)$ is well-ordered iff $x\pre y$.

Likewise, if ${h_1}<{h_0}$, we have $a\pre c\pre b$, and $z\pre x\pre y$ or $z\pre y\pre x$.
Again, $(a,b,c)$ is not well-ordered, and $(x,y,z)$ is well-ordered iff $x\pre y$.
\end{proof}

\begin{property}\label{prop:varEffects}
Let $(I,\BB)$ be a 3DT-instance with an $l$-block-decomposition,
such that the external triples of $I=\inst$ can be partitioned into 
a set of valid variables $\mathcal A$. 
Let $(d,e,f)$ be a well-ordered triple of $I$, such that there exists
a 3DT-step $(I,\BB)\DT{d,e,f}(I',\BB')$, with $I'=\instp$. Then one of the two following 
cases is true:

\begin{itemize}
\item $(d,e,f)$ is an internal triple. We write $h_0=\block(d)=\block(e)=\block(f)$.
Then for all $\sigma\in\Sigma'$, $\blockIp(\sigma)=\blockI(\sigma)$.
Moreover if $\sigma_1,\sigma_2\in \Sigma'$ with $ \blockIp(\sigma_1)= \blockIp(\sigma_2)\neq h_0$ and $\sigma_1\pre_I\sigma_2$, 
then $\sigma_1\pre_{I'}\sigma_2$.

\item $\exists A=[(a,b,c),(x,y,z)]\in \mathcal A$, such that $(d,e,f)=(x,y,z)$. 
Then $\blockIp(b)=target(A)$ and for all $\sigma\in\Sigma'-\{b\}$, $\blockIp(\sigma)=\blockI(\sigma)$.
Moreover if $\sigma_1,\sigma_2\in \Sigma'-\{b\}$, such that $\sigma_1\pre_I\sigma_2$, then 
$\sigma_1\pre_{I'}\sigma_2$.
\end{itemize}
\end{property}

\begin{proof}
We respectively write $\tau$ and $i,j,k$ 
the transposition and the three integers such that 
$\tau=\tau_{i,j,k}=\tau[d,e,f,\psi]$ (necessarily, $0<i<j<k\leq n$). We also 
write $\BB=(s_0,s_1,\ldots,s_l)$.
The triple $(d,e,f)$ is either internal or external in $\BB$. 

If $(d,e,f)$ is internal, with $h_0=\block(d)=\block(e)=\block(f)$, we have (see Figure~\ref{fig:blocks_after}a): \begin{equation*}
{s_{h_0}<i<j<k\leq t_{h_0}}.\end{equation*}

Hence for all $h\in\intvl1l$, either $s_h<i$ or $k\leq s_h$, and $\tau^{-1}(s_h)=s_h$ by Definition~\ref{def:tau}. 
Moreover, for all $\sigma\in\Sigma$, we have
\begin{eqnarray*}
i\leq \psi(\sigma)<k &\Rightarrow & \psi(\sigma)\in\intvl{s_{h_0}+1}{t_{h_0}}\text{ and }\tau^{-1}(s_{h_0})< i\leq \tau^{-1}(\psi(\sigma))<k\leq \tau^{-1}(t_{h_0})\\
&\Rightarrow & \block(\sigma)=h_0= \blockIp(\sigma)\\
\psi(\sigma)<i\text{ or } k\leq \psi(\sigma) &\Rightarrow & \tau^{-1}(\psi(\sigma))=\psi(\sigma)\\
&\Rightarrow & \blockIp(\sigma)=\blockI(\sigma)
\end{eqnarray*}
Finally, if $\sigma_1,\sigma_2\in \Sigma'$ with $ \blockIp(\sigma_1)= \blockIp(\sigma_2)\neq h_0$, then 
we have both $\tau^{-1}(\psi(\sigma_1))=\psi(\sigma_1)$ and $\tau^{-1}(\psi(\sigma_2))=\psi(\sigma_2)$.
Hence $\sigma_1\pre_I\sigma_2 \Leftrightarrow \sigma_1\pre_{I'}\sigma_2$.

If $(d,e,f)$ is external, then, since the set of external triples can be partitioned into variables,
there exists a variable $A=[(a,b,c),(x,y,z)]\in \mathcal A$, such that $(d,e,f)=(a,b,c)$ or $(d,e,f)=(x,y,z)$. 
Since $(d,e,f)$ is well-ordered in $I$, we have, by Property~\ref{prop:onlyXYZ}, $(d,e,f)=(x,y,z)$ and $x\pre_I y$, 
see Figure~\ref{fig:blocks_after}b. 
And since $A$ is valid, by condition 
\Cv\ of Definition~\ref{def:validVar}, $x\preCons_I b\preCons_I y$.
We write $h_0=source(A)$ and $h_1=target(A)$, and we assume that $h_0<h_1$, 
which implies $x\pre_I y\pre_I z$ 
(the case $h_1<h_0$ with $z\pre_I x\pre_I y$ is similar): 
thus, we have 
\begin{equation*}
i=\psi(x),\ j=\psi(y),\ k=\psi(z),\text{ and }s_{h_0}<i<j\leq t_{h_0}\leq s_{h_1} <k\leq t_{h_1}.
\end{equation*}
%
We define a set of indices $U$ by 
\begin{equation*}
U=\{s_h\mid h\in\intvl1l\}\,\cup\,\{n\}\,\cup\,\{\psi(\sigma)\mid \sigma\in \Sigma'-\{b\}\}.
\end{equation*}

We now show that for all $u\in U$, we have $u< i$ or $j\leq u$. Indeed, if $u=s_h$ for some $h\in \intvl1l$, 
then either $h\leq h_0$ and $u\leq s_{h_0}<i$, or  $h_0<h$ and $j\leq t_{h_0} \leq u$. 
Also, if $u=n$, then $j\leq u$. Finally, assume $u=\psi(\sigma)$, 
with $\sigma\in \Sigma'-\{b\}$.
We then have $x\pre_I\sigma \pre_I y\Leftrightarrow \sigma=b$, since 
$x\preCons_I b \preCons_I y$. 
Hence  either $\sigma\pre_I x$ and $u<\psi(x)=i$, or $y\pre_I \sigma$ and $\psi(y)=j<u$. 

By Property~\ref{prop:echangeTau}, if $u,v\in U$ are such that $u<v$, then $\tau^{-1}(u)<\tau^{-1}(v)$. 
This implies that elements of $\Sigma'-\{b\}=\Sigma-\{b,x,y,z\}$ 
do not change blocks after applying $\tau^{-1}$ on $\psi$,
and that the relative order of any two elements is preserved.
Finally, for $b$, we have $x\pre_I b\pre_I y$, hence
\begin{equation*}
i\leq\psi(b)< j \leq s_{h_1}< k\leq t_{h_1}.
\end{equation*}
Thus, by Property~\ref{prop:echangeTau}, $\tau^{-1}(s_{h_1})<\tau^{-1}(\psi(b))< \tau^{-1}(t_{h_1})$, 
and $\blockIp(b)=h_1=target(A)$. 
This completes the proof.
\end{proof}

\begin{defn}[Valid context]
\label{def:validContext}
A 3DT-instance with an $l$-block-decomposition $(I,\BB)$ 
is a \emph{valid context} if the set of external triples of 
$I$ can be partitioned into valid variables.
\end{defn}

With the following property, we ensure that a valid context remains 
\emph{almost} valid after applying a 3DT-step: the partition of the external triples into variables if kept through this 3DT-step,
but conditions \Civ\ and \Cv\ of Definition~\ref{def:validVar} are not necessarily satisfied.

\begin{property}\label{prop:keepCiiCiii}
Let $(I,\BB)$ be a valid context 
and $(I,\BB)\DT{d,e,f}(I',\BB')$ be a 3DT-step. 
Then the external triples of $(I',\BB')$ can be partitioned into a set of variables,
each satisfying conditions \Cii\ and \Ciii\ of Definition~\ref{def:validVar}.
\end{property}
\begin{proof}
Let $I=\inst$, $I'=\instp$, $\mathcal A$ be the set of variables of $I$, and
$E$ (resp. $E'$) be the set of external triples
of $I$ (resp. $I'$).
From Property~\ref{prop:varEffects}, two cases are possible.

First case: $(d,e,f)\notin E$.
Then for all $\sigma\in\Sigma'$, $\blockIp(\sigma)=\blockI(\sigma)$.
Hence $E'=E$, and $(I',\BB')$ has the same set of variables as $(I,\BB)$, that is $\mathcal A$.
The source and target blocks of every variable remain unchanged, 
hence conditions \Cii\ and \Ciii\ of Definition~\ref{def:validVar} are still 
satisfied for each $A\in\mathcal A$ in $\BB'$.

Second case: $(d,e,f)\in E$, and $\exists A=[(a,b,c),(x,y,z)]\in \mathcal A$, 
such that $(d,e,f)=(x,y,z)$, by Property~\ref{prop:varEffects}. 
Then $\blockIp(b)=target(A)$
and for all $\sigma\in\Sigma'-\{b\}$, $\blockIp(\sigma)=\blockI(\sigma)$.
Hence $\blockIp(b)=\blockIp(a)=\blockIp(c)$, and $E'=E-\{(x,y,z),(a,b,c)\}$: indeed, $(x,y,z)$ is deleted in $T'$ 
so $(x,y,z)\notin E'$, $(a,b,c)$ is internal in $I'$, and every other triple is untouched.
And for every $A'=[(a',b',c'),(x',d',e')]\in \mathcal A- \{A\}$, we have 
$\blockIp(\sigma)=\blockI(\sigma)$ for $\sigma\in\{a',b',c',x',y',z'\}$, 
hence $A'$ satisfies conditions \Cii\ and \Ciii\ of Definition~\ref{def:validVar} in $\BB'$.
\end{proof}

\begin{figure}\begin{center}
\includegraphics{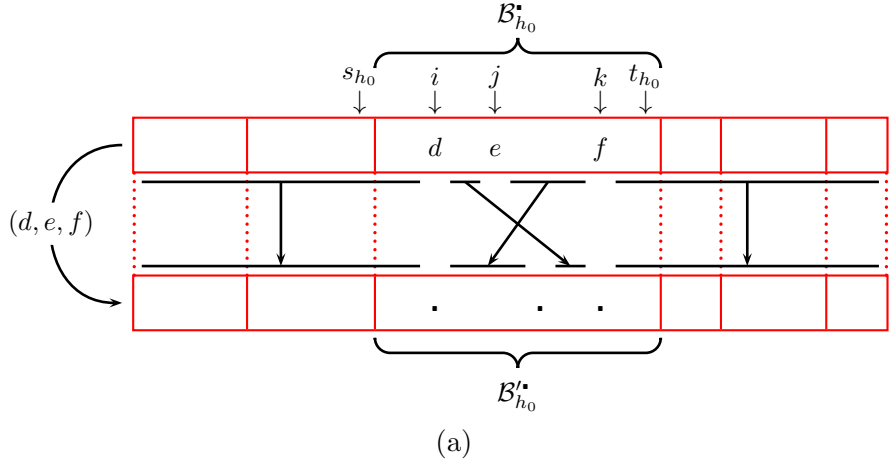}\\(a)\bigskip

\includegraphics{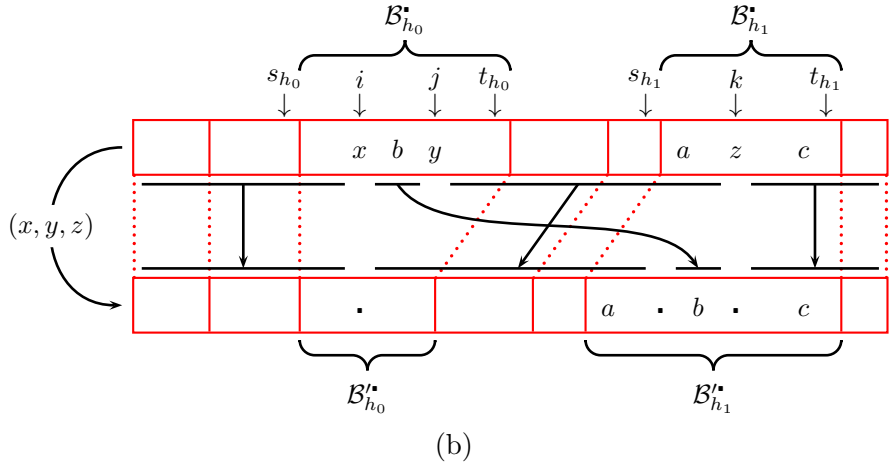}\\(b)
\caption{\label{fig:blocks_after} Effects of a 3DT-step $\DT{d,e,f}$ on an $l$-block-decomposition if 
(a) 
$(d,e,f)$ is an internal triple, 
or (b) 
there exists a variable $A=[(a,b,c),(x,y,z)]$ such that $(d,e,f)=(x,y,z)$.
Both figures are in fact derived from Figure~\ref{fig:3DT_step} in the context of an $l$-block-decomposition.
}
\end{center}
\end{figure}

Consider a block $B$ in a valid context $(I,\BB)$ 
(there exists $h\in\intvl1l$ such that $B=\B h$), 
and $(d,e,f)$ a triple of $I$ such that $(I,\BB)\DT{d,e,f}(I',\BB')$ 
(we write $B'=\Bp h$). Then, 
following Property~\ref{prop:varEffects}, four cases are possible:
\begin{itemize}
\item $h\notin\{\block(d),\block(e),\block(f)\}$, 
hence $B'=B$, since, by Property~\ref{prop:varEffects}, 
the relative order of the elements of $B$ remains unchanged 
after the 3DT-step $\DT{d,e,f}$.
\item $(d,e,f)$ is an internal triple of $B$. We write
\begin{center}\includegraphics{\figurePath{mini_internal}}\end{center}
\item $\exists A=[(a,b,c),(x,y,z)]$ such that $h=source(A)$ and $(d,e,f)=(x,y,z)$ 
($A$ is an output of~$B$), see Figure~\ref{fig:activatingOneVar} (left). We write
\begin{center}\includegraphics{\figurePath{mini_output}}\end{center}
\item $\exists A=[(a,b,c),(x,y,z)]$ such that $h=target(A)$ and $(d,e,f)=(x,y,z)$ 
($A$ is an input of~$B$), see Figure~\ref{fig:activatingOneVar} (right). We write 
\begin{center}\includegraphics{\figurePath{mini_input}}\end{center}
\end{itemize}

The graph obtained from a block $B$ by following exhaustively the 
possible arcs as defined above (always assuming this block is in a valid 
context) is called the \emph{behavior graph} of $B$.


\begin{figure}\begin{center}
\includegraphics{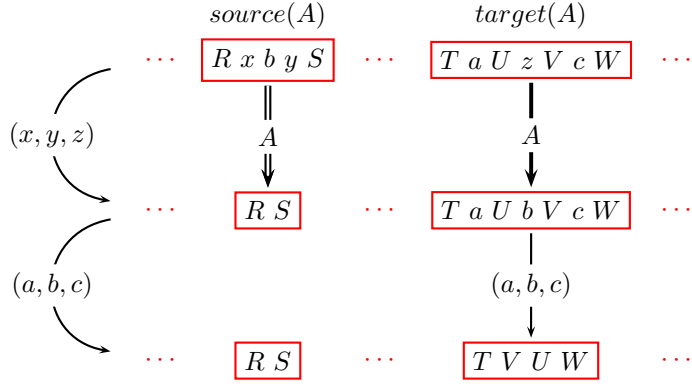}
\caption{\label{fig:activatingOneVar} The activation 
of a variable $A=[(a,b,c),(x,y,z)]$ is written 
with a double arc in the behavior graph of the source block of $A$ and with a thick arc 
in the behavior graph of its target block.
It can be followed by the 3DT-step $\DT{a,b,c}$, impacting only the target block of $A$. 
Dot symbols ($\w$) are omitted. 
We denote by $R,S,T,U,V,W$ some factors of the source and target blocks of $A$: the consequence of activating 
$A$ is to allow $U$ and $V$ to be swapped in $target(A)$.}
\end{center}
\end{figure}



\newpage
\subsection{Basic Blocks}
We now define four basic blocks: 
$\blockFont{copy}$, $\blockFont{and}$, $\blockFont{or}$, and $\blockFont{var}$. 
They are studied independently in this section, before being assembled in Section~\ref{sect:assemblage}.
Each of these blocks is defined by a word
and a set of triples. 
We distinguish internal triples, for which all three elements appear in a single 
block, from external triples, which are part of an input/output variable, and for which 
only one or two elements appear in the block. 
Note that each external triple is part of an input (resp. output) variable,
which itself must be an output (resp. input) of another block, the other block containing
the remaining elements of the triple.

We then draw the behavior graph of each of these blocks 
(Figures~\ref{fig:block_copy2} to~\ref{fig:block_var}): 
in each case, we assume that the block is in a valid context, and 
follow exhaustively the 3DT-steps that can be applied on it. 
We then give another graph (Figures~\ref{fig:blockAbs}a to ~\ref{fig:blockAbs}d), 
obtained from the behavior graph by contracting all arcs 
corresponding to 3DT-steps using
internal triples, i.e. we assimilate every pair of nodes linked by such an arc.
Hence, only the arcs corresponding to the activation of an input/output 
variable remain. From this second figure, we derive a property describing the behavior
of the block, in terms of activating input and output variables 
(always provided this block is in a valid context).
It must be kept in mind that for any variable, it is the state of 
the source block which determines whether it can be activated, 
whereas the activation itself affects mostly the target block.


\subsubsection{The block $\blockFont{copy}$}
This block aims at duplicating a variable: any of the two output variables can only be activated after 
the input variable has been activated.

\noindent
Input variable: $A=[(a,b,c),(x,y,z)]$.

\noindent
Output variables: $A_1=[(a_1,b_1,c_1),(x_1,y_1,z_1)]$ and $A_2=[(a_2,b_2,c_2),(x_2,y_2,z_2)]$.

\noindent
Internal triple: $(d,e,f)$.

\noindent
Definition:
\begin{equation*}
\blockFrame{[A_1,A_2]=\blockFont{copy}(A)} \quad = \quad \blockFrame{a\ y_1\ e\ z\ d\ y_2\ x_1\ b_1\ c\ x_2\ b_2\ f}
\end{equation*}

\begin{property}\label{prop:block_copy2}
In a block $[A_1,A_2]=\blockFont{copy}(A)$ in a valid context, the possible orders in which
$A$, $A_1$ and $A_2$ can be activated are $(A,A_1,A_2)$ and $(A,A_2,A_1)$.
\end{property}
\begin{proof}
See Figures~\ref{fig:block_copy2} and~\ref{fig:blockAbs}a.
\end{proof}

\begin{figure}\begin{center}
\includegraphics{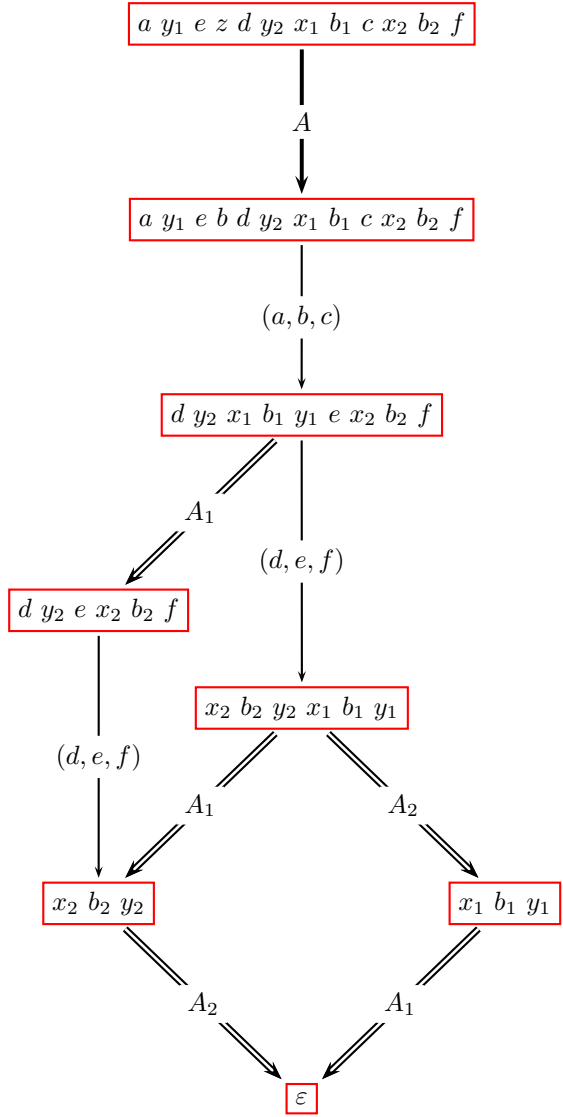}
\caption{\label{fig:block_copy2} Behavior graph of the block $[A_1,A_2]=\blockFont{copy}(A)$.
A thick (resp. double) arc corresponds to the 3DT-step $\DT{x,y,z}$ for an input (resp. output) variable $[(a,b,c),(x,y,z)]$.}
\end{center}
\end{figure}

\subsubsection{The block $\blockFont{and}$}
This block aims at simulating a conjunction: the output variable can only be activated after 
both input variables have been activated.

\noindent
Input variables: $A_1=[(a_1,b_1,c_1),(x_1,y_1,z_1)]$ and $A_2=[(a_2,b_2,c_2),(x_2,y_2,z_2)]$.

\noindent
Output variable: $A=[(a,b,c),(x,y,z)]$.

\noindent
Internal triple: $(d,e,f)$.

\noindent
Definition:
\begin{equation*}
\blockFrame{A=\blockFont{and}(A_1,A_2)}\quad=\quad \blockFrame{a_1\ e\ z_1\ a_2\ c_1\ z_2\ d\ y\ c_2\ x\ b\ f}
\end{equation*}

\begin{property}\label{prop:block_and}
In a block $A=\blockFont{and}(A_1,A_2)$ in a valid context, the possible orders in which
$A$, $A_1$ and $A_2$ can be activated are $(A_1,A_2,A)$ and $(A_2,A_1,A)$.
\end{property}
\begin{proof}
See Figures~\ref{fig:block_and} and~\ref{fig:blockAbs}b.
\end{proof}


\begin{figure}\begin{center}
\includegraphics{\figurePath{block_and2}}
\caption{\label{fig:block_and} Behavior graph of the block $A=\blockFont{and}(A_1,A_2)$.}
\end{center}
\end{figure}

\subsubsection{The block $\blockFont{or}$}
This block aims at simulating a disjunction: the output variable can be activated as soon as any 
of the two input variables is activated.

\noindent
Input variables: $A_1=[(a_1,b_1,c_1),(x_1,y_1,z_1)]$ and $A_2=[(a_2,b_2,c_2),(x_2,y_2,z_2)]$.

\noindent
Output variable: $A=[(a,b,c),(x,y,z)]$.

\noindent
Internal triples: $(a',b',c')$ and $(d,e,f)$.

\noindent
Definition:
\begin{equation*}
\blockFrame{A=\blockFont{or}(A_1,A_2)}\quad=\quad \blockFrame{a_1\ b'\ z_1\ a_2\ d\ y\ a'\ x\ b\ f\ z_2\ c_1\ e\ c'\ c_2}
\end{equation*}

\begin{property}\label{prop:block_or}
In a block $A=\blockFont{or}(A_1,A_2)$ in a valid context, the possible orders in which
$A$, $A_1$ and~$A_2$ can be activated are $(A_1,A,A_2)$, $(A_2,A,A_1)$, $(A_1,A_2,A)$ and $(A_2,A_1,A)$.
\end{property}
\begin{proof}
See Figures~\ref{fig:block_or} and~\ref{fig:blockAbs}c.
\end{proof}

\begin{figure}\begin{center}
\makebox[1pt][c]{
\includegraphics{\figurePath{block_or4}}}
\caption{\label{fig:block_or} Behavior graph of the block $A=\blockFont{or}(A_1,A_2)$.}
\end{center}
\end{figure}

\subsubsection{The block $\blockFont{var}$}

This block aims at simulating a boolean variable: in a first stage, 
only one of the two output variables can be activated. 
The other needs the activation of the input variable to be activated. 

\noindent
Input variable: $A\zz=[(a\zz,b\zz,c\zz),(x\zz,y\zz,z\zz)]$.

\noindent
Output variables: $A_1=[(a_1,b_1,c_1),(x_1,y_1,z_1)]$, $A_2=[(a_2,b_2,c_2),(x_2,y_2,z_2)]$.

\noindent
Internal triples: $(d_1,e_1,f_1)$, $(d_2,e_2,f_2)$ and $(a',b',c')$.

\noindent
Definition:
\begin{equation*}
\blockFrame{[A_1,A_2]=\blockFont{var}(A\zz)}\quad=\quad \blockFrame{d_1\ y_1\ a\zz\ d_2\ y_2\ e_1\ a'\ e_2\ x_1\ b_1\ f_1\ c'\ z\zz\ b'\ c\zz\ x_2\ b_2\ f_2}
\end{equation*}


\begin{property}\label{prop:block_var}
In a block $[A_1,A_2]=\blockFont{var}(A)$ in a valid context, the possible orders in which
$A$, $A_1$ and $A_2$ can be activated are $(A_1,A,A_2)$, $(A_2,A,A_1)$, $(A,A_1,A_2)$ and $(A,A_2,A_1)$.
\end{property}
\begin{proof}
See Figures~\ref{fig:block_var} and~\ref{fig:blockAbs}d.
\end{proof}
With such a block, if $A$ is not activated first, 
one needs to make a choice between activating 
$A_1$ or $A_2$. Once $A$ is activated, however, all remaining output variables are activable.


\begin{figure}\begin{center}
\makebox[1pt][c]{
\includegraphics{\figurePath{block_var3}}}
\caption{\label{fig:block_var} Behavior graph of the block $[A_1,A_2]=\blockFont{var}(A)$.}
\end{center}
\end{figure}

\begin{figure}
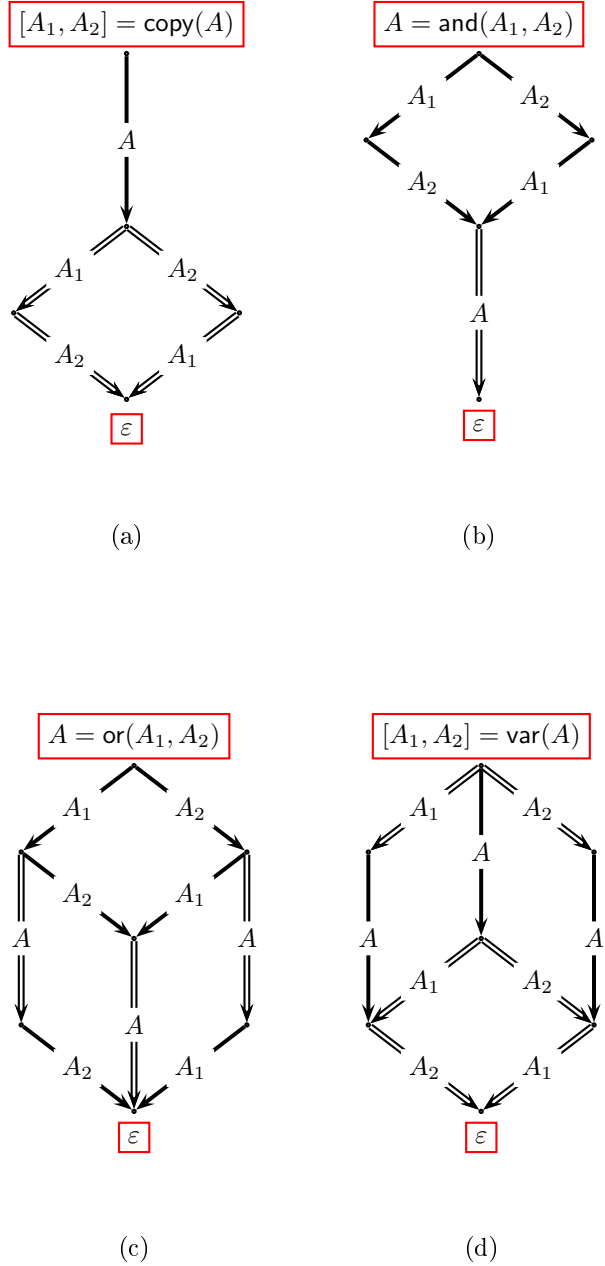
\begin{center}
\includegraphics{\figurePath{blockAbs_copy2}}\qquad
\includegraphics{\figurePath{blockAbs_and}}
\vspace{5em}

\includegraphics{\figurePath{blockAbs_or}}\qquad
\includegraphics{\figurePath{blockAbs_var}}
\caption{\label{fig:blockAbs} Abstract representations of the blocks $\blockFont{copy}$, 
$\blockFont{and}$, $\blockFont{or}$, and $\blockFont{var}$, obtained
from each behavior graph (Figures~\ref{fig:block_copy2}, \ref{fig:block_and}, \ref{fig:block_or} and \ref{fig:block_var})
by contracting arcs corresponding to internal triples, 
and keeping only the arcs corresponding to variables. 
We see, for each block, which output variables are activable, 
depending on which variables have been activated.}
\end{center}
\end{figure}


\subsubsection{Assembling the blocks $\blockFont{copy}$, $\blockFont{and}$, $\blockFont{or}$, $\blockFont{var}$.}
\begin{defn}[Assembling of basic blocks]
An {\em assembling of basic blocks} $(I,\BB)$ is composed of a 3DT-instance $I$ and an $l$-block-decomposition 
$\BB$ obtained by the following process:
\begin{itemize}
\item Create a set of variables $\mathcal A$.
\item Define $I=\inst$ by its word representation, as a concatenation of $l$ factors $\BB^{\w}_1\ \BB^{\w}_2\ \ldots\ \BB^{\w}_l$ and a set of triples~$T$,
where each $\BB^{\w}_h$ is one of the blocks  $[A_1,A_2]=\blockFont{copy}(A)$,  $A=\blockFont{and}(A_1,A_2)$, ${A=\blockFont{or}(A_1,A_2)}$ or $[A_1,A_2]=\blockFont{var}(A)$, with $A_1,A_2,A\in \mathcal A$ (such that each $X\in \mathcal A$ appears in the input of exactly one block, and in the output of exactly one other block); and where $T$ is the union of the set of internal triples needed in each block, and the set of external triples defined by the variables of~$\mathcal A$.
\end{itemize}
\end{defn}


\begin{example}\label{ex:ToyAssembling}
We create a 3DT-instance $I$ with a 2-block-decomposition $\BB$ 
such that $(I,\BB)$ is an assembling of basic blocks, defined as follows:
\begin{itemize}
\item $I$ uses three variables, $\mathcal A=\{X_1,X_2, Y\}$
\item the word representation of $I$ is the concatenation of $[X_1,X_2]=\blockFont{var}(Y)$ and $Y=\blockFont{or}(X_1,X_2)$
\end{itemize}

With $X_1=[(a_1,b_1,c_1),(x_1,y_1,z_1)]$, $X_2=[(a_2,b_2,c_2),(x_2,y_2,z_2)]$, $Y=[(a,b,c),(x,y,z)]$, and the internal triples
$(d_1,e_1,f_1), (d_2,e_2,f_2), (a',b',c')$ for the block 
$\blockFont{var}$, and $(a'',b'',c''), (d,e,f)$ for the block $\blockFont{or}$, the word representation of 
$I$ is the following (note that its 2-block-decomposition is $(0,18)$):
\begin{equation*}
I=d_1\ y_1\ a\ d_2\ y_2\ e_1\ a'\ e_2\ x_1\ b_1\ f_1\ c'\ z\ b'\ c\ x_2\ b_2\ f_2 \quad
a_1\ b''\ z_1\ a_2\ d\ y\ a''\ x\ b\ f\ z_2\ c_1\ e\ c''\ c_2
\end{equation*}
Indeed, a possible sequence of 3DT-steps leading from $I$ to $\emptyinst$ 
is given in Figure~\ref{fig:exToyAssembling}, hence $I$ is 3DT-collapsible. 
\end{example}
\newcommand{\mynote}[2]{
\raisebox{0.8em}{\ }\raisebox{-0.5em}{\ }\hspace{1.9em}\makebox[0pt]{\speclab{#1}}\hspace{39.1em}\makebox[0pt][r]{ \small #2 }
}

\newcommand{\speclab}[1]{\small $#1$}
\newcommand{\spec}[1]{\text{\boldmath$#1$}}

\newcommand{\myunderline}[1]{\underline{\vphantom{f_{f_f}}#1}}
\begin{figure}
\begin{align*}
I&= |\ \spec{d_1}\ \myunderline{y_1\ a\ d_2\ y_2}\ \spec{e_1}\ \myunderline{\mbox{$a'\ e_2\ x_1\ b_1$}}\ \spec{f_1}\ c'\ z\ b'\ c\ x_2\ b_2\ f_2 
    \ |\ a_1\ b''\ z_1\ a_2\ d\ y\ a''\ x\ b\ f\ z_2\ c_1\ e\ c''\ c_2\ | \\
\downarrow &\mynote{(d_1,e_1,f_1)}{Internal triple of $\B1$}\\
I_{10}&= |\ a'\ e_2\ \spec{x_1}\ \myunderline{b_1}\ \spec{y_1}\ \myunderline{a\ d_2\ y_2\ c'\ z\ b'\ c\ x_2\ b_2\ f_2 
    \ |\ a_1\ b''}\ \spec{z_1}\ a_2\ d\ y\ a''\ x\ b\ f\ z_2\ c_1\ e\ c''\ c_2\ | \\
\downarrow &\mynote{(x_1,y_1,z_1)}{\bf Activation of $X_1$}\\
I_{9}&= |\ a'\ e_2\ a\ d_2\ y_2\ c'\ z\ b'\ c\ x_2\ b_2\ f_2 
    \ |\ \spec{a_1}\ \myunderline{b''}\ \spec{b_1}\ \myunderline{a_2\ d\ y\ a''\ x\ b\ f\ z_2}\ \spec{c_1}\ e\ c''\ c_2\ | \\
\downarrow &\mynote{(a_1,b_1,c_1)}{Internal triple of $\B2$}\\
I_{8}&= |\ a'\ e_2\ a\ d_2\ y_2\ c'\ z\ b'\ c\ x_2\ b_2\ f_2 
    \ |\ a_2\ d\ y\ \spec{a''}\ \myunderline{x\ b\ f\ z_2}\ \spec{b''}\ \myunderline{e}\ \spec{c''}\ c_2\ | \\
\downarrow &\mynote{(a'',b'',c'')}{Internal triple of $\B2$}\\
I_{7}&= |\ a'\ e_2\ a\ d_2\ y_2\ c'\ z\ b'\ c\ x_2\ b_2\ f_2 
    \ |\ a_2\ \spec{d}\ \myunderline{y}\ \spec{e}\ \myunderline{x\ b}\ \spec{f}\ z_2\ c_2\ | \\
\downarrow &\mynote{(d,e,f)}{Internal triple of $\B2$}\\
I_{6}&= |\ a'\ e_2\ a\ d_2\ y_2\ c'\ \spec{z}\ \myunderline{b'\ c\ x_2\ b_2\ f_2 
    \ |\ a_2}\ \spec{x}\ \myunderline{b}\ \spec{y}\ z_2\ c_2\ | \\
\downarrow &\mynote{(x,y,z)}{\bf Activation of $Y$}\\
I_{5}&= |\ a'\ e_2\ \spec{a}\ \myunderline{d_2\ y_2\ c'}\ \spec{b}\ \myunderline{b'}\ \spec{c}\ x_2\ b_2\ f_2 
    \ |\ a_2\ z_2\ c_2\ | \\
\downarrow &\mynote{(a,b,c)}{Internal triple of $\B1$}\\
I_{4}&= |\ \spec{a'}\ \myunderline{e_2}\ \spec{b'}\ \myunderline{d_2\ y_2}\ \spec{c'}\ x_2\ b_2\ f_2 
    \ |\ a_2\ z_2\ c_2\ | \\
\downarrow &\mynote{(a',b',c')}{Internal triple of $\B1$}\\
I_{3}&= |\ \spec{d_2}\ \myunderline{y_2}\ \spec{e_2}\ \myunderline{x_2\ b_2}\ \spec{f_2} 
    \ |\ a_2\ z_2\ c_2\ | \\
\downarrow &\mynote{(d_2,e_2,f_2)}{Internal triple of $\B1$}\\
I_{2}&= |\ \spec{x_2}\ \myunderline{b_2}\ \spec{y_2} 
    \myunderline{\ |\ a_2}\ \spec{z_2}\ c_2\ | \\
\downarrow &\mynote{(x_2,y_2,z_2)}{\bf Activation of $X_2$}\\
I_{1}&= |\ \emptyinst  
    \ |\ \spec{a_2\ b_2\ c_2}\ | \\
\downarrow &\mynote{(a_2,b_2,c_2)}{Internal triple of $\B2$}\\
I_{0}&= |\ \emptyinst  
    \ |\ \emptyinst\ | = \emptyinst
\end{align*}
\caption{%
\label{fig:exToyAssembling} 3DT-collapsibility of the assembling of basic blocks
${[X_1,X_2]=\blockFont{var}(Y)}$ and ${Y=\blockFont{or}(X_1,X_2)}$.
For each 3DT-step, the three elements that are deleted from the alphabet are in bold, 
the elements that are swapped by the corresponding transposition are underlined. Vertical bars
give the limits of the blocks in the 2-block-decomposition, and dot symbols are omitted.%
}
\end{figure}

\begin{lem}\label{lem:completeBD}%
Let $I'$ be a 3DT-instance with an $l$-block-decomposition $\BB'$, such that $(I',\BB')$ is obtained from 
an assembling of basic blocks $(I,\BB)$ after any number of 3DT-steps, i.e. 
there exist $k\geq 0$ triples $(d_i,e_i,f_i)$, $i\in\intvl1k$, 
such that $(I,\BB)\DT{d_1,e_1,f_1}\cdots\DT{d_k,e_k,f_k}(I',\BB')$.

Then $(I',\BB')$ is a valid context.
Moreover, if the set of variables of $(I',\BB')$ is empty, then $I'$ is 3DT-collapsible.
\end{lem}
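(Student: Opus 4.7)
The plan is to prove both claims simultaneously by induction on the number $k$ of 3DT-steps used to pass from $(I,\BB)$ to $(I',\BB')$.

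For the base case $k=0$, I show that the assembling of basic blocks $(I,\BB)$ is itself a valid context. By construction, each external triple of $I$ belongs to exactly one variable $A=[(a,b,c),(x,y,z)]$ of $\mathcal A$, with $(a,b,c)$ placed in the block where $A$ is an input and $(x,y,z)$ in the block where $A$ is an output; conditions \Cii\ and \Ciii\ are then immediate. Conditions \Civ\ and \Cv\ follow by direct inspection of the four basic-block definitions: in each of $\blockFont{copy}$, $\blockFont{and}$, $\blockFont{or}$ and $\blockFont{var}$, the elements $b,x,y$ of every output variable and $a,c,z$ of every input variable appear inside the block word in the positions required by $x\preCons b\preCons y$ (whenever $x\pre y$) and by $a\pre z\pre c$.

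For the inductive step, assume $(I_{k-1},\BB_{k-1})$ is a valid context and $(I_{k-1},\BB_{k-1})\DT{d,e,f}(I_k,\BB_k)$. Property~\ref{prop:keepCiiCiii} already guarantees that the external triples of $(I_k,\BB_k)$ partition into variables satisfying \Cii\ and \Ciii. For \Civ\ and \Cv, Property~\ref{prop:varEffects} ensures that any variable whose source and target blocks are both disjoint from the step's active block(s) keeps all its internal orderings, and hence remains valid; the only variables needing a check are those incident to one of the (at most two) blocks affected by the step. By the induction hypothesis these blocks were sitting at some node of their respective behavior graphs, and the 3DT-step traverses one of the arcs out of that node. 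I would then verify, by direct inspection of every node of the four behavior graphs (Figures~\ref{fig:block_copy2}, \ref{fig:block_and}, \ref{fig:block_or} and \ref{fig:block_var}), that at every reachable state the elements $b,x,y$ of each incident output variable and $a,c,z$ of each incident input variable still sit in positions satisfying \Civ\ and \Cv.

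For the second claim, suppose the set of variables of $(I',\BB')$ is empty. Then every external triple of $I$ has been consumed and each block has reached a terminal node of its behavior graph in which no output variable survives and no $(a,b,c)$ of an input variable remains external. Inspecting each such terminal node of Figures~\ref{fig:block_copy2}--\ref{fig:block_var}, I would exhibit a sequence of 3DT-steps on internal triples that reduces the block to $\emptyinst$; since internal triples never cross block boundaries, concatenating these per-block sequences collapses $I'$ entirely.

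The hard part will be the bookkeeping of the inductive step: each of the four behavior graphs has several nodes and several arcs, and at each node conditions \Civ\ and \Cv\ must be re-verified for every incident variable. The behavior graphs were however explicitly designed to make these invariants visible directly on the picture, so what remains is a finite, mechanical case analysis rather than a new substantive argument.
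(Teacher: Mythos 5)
Your proof follows essentially the same route as the paper's: induction on the number of 3DT-steps, with Property~\ref{prop:keepCiiCiii} handling conditions \Cii\ and \Ciii, a finite verification at each node of the four behavior graphs handling \Civ\ and \Cv, and an inspection of the terminal nodes giving the collapsibility claim. The one refinement worth making is to declare explicitly that the assertion \emph{each block $\Bp h$ is a node of the behavior graph of $\B h$} is part of the induction invariant (as the paper does), since your inductive step appeals to it while your stated hypothesis is only that $(I_{k-1},\BB_{k-1})$ is a valid context.
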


\begin{proof}
Write $\mathcal A$ the set of variables used to define $(I,\BB)$. We write $I=\inst$ and $I'=\instp$.
We prove that $(I',\BB')$ is a valid context by induction on $k$ 
(the number of 3DT-steps between $(I,\BB)$ and $(I',\BB')$). 
We also prove that for each $h\in\intvl1l$, $\Bp h$ appears as a node in
the behavior graph of $\B h$.

Suppose first that $k=0$. We show that the set of external triples 
of $(I,\BB)=(I',\BB')$ can be partitioned into 
valid variables, namely into $\mathcal A$. Indeed, from the 
definition of each block, for each $\sigma\in\Sigma$, 
$\sigma$ is either part of an internal triple, or appears in a variable 
$A\in\mathcal A$. Conversely, for each $A=[(a,b,c),(x,y,z)]
\in \mathcal A$, $b$, $x$ and $y$ appear in the block having $A$ for 
output, and $a$, $c$ and $z$ appear in the block having $A$ for input.
Hence $(a,b,c)$ and $(x,y,z)$ are indeed two external triples of $(I,\BB)$. 
Hence each variable satisfies conditions \Cii\ and \Ciii\ of Definition~\ref{def:validVar}.
Conditions \Civ\ and \Cv\ can be checked in the definition of each block: we have, for 
each output variable, $y\pre x$, and for each input variable, 
$a\pre z\pre c$. 
Finally, each $\B h$ appears in its own behavior graph.

Suppose now that $(I',\BB')$ is obtained from $(I,\BB)$ after $k$ 3DT-steps, $k>0$. Then there 
exists a 3DT-instance with an $l$-block-decomposition $(I'',\BB'')$ such that:
\begin{equation*}
(I,\BB)\DT{d_1,e_1,f_1}\cdots\DT{d_{k-1},e_{k-1},f_{k-1}}(I'',\BB'')\DT{d_k,e_k,f_k}(I',\BB').
\end{equation*} 
Consider $h\in\intvl1l$. By induction hypothesis,
since $\BB''_h$ is in a valid context $(I'',\BB'')$,
then, depending on $(d_k,e_k,f_k)$, either $\BB'_h=\BB''_h$, either there is an arc
from $\BB''_h$ to $\BB'_h$ in the behavior graph. Hence $\BB'_h$ is indeed a node
in this graph. By Property~\ref{prop:keepCiiCiii}, we know that the set of external triples of 
$(I',\BB')$ can be partitioned into variables satisfying conditions 
\Cii\ and \Ciii\ of Definition~\ref{def:validVar}.
Hence we need to prove that each variable satisfies conditions \Civ\ and \Cv: 
we verify, for each node of each behavior graph, 
that $x\pre y\Rightarrow x\preCons b\preCons y$ (resp. $a\pre z\pre c$) for 
each output (resp. input) variable $A=[(a,b,c),(x,y,z)]$ of the block. 
This achieves the induction proof.

We finally need to consider the case where the set of variables of $(I',\BB')$ is empty. Then 
for each $h\in\intvl1l$ we either have $\Bp h=\emptyinst$, or $\Bp h=a_h\ b_h\ c_h$ for some internal triple
$(a_h,b_h,c_h)$ (in the case where $\B h$ is a block $\blockFont{or}$). 
Then $(I',\BB')$ is indeed 3DT-collapsible: simply follow in any order the 3DT-step $\DT{a_h,b_h,c_h}$ for 
each remaining triple $(a_h,b_h,c_h)$.
\end{proof}


\clearpage
\subsection{Construction}

\label{sect:assemblage}
Let $\phi$ be a boolean formula, over the boolean variables $x_1,\ldots,x_m$, given in conjunctive normal form:  
$\phi=C_1\wedge C_2 \wedge \ldots \wedge C_\gamma$. 
Each clause $C_c$ ($c\in\intvl1\gamma$) is the disjunction of a number of literals, $x_i$ or $\neg x_i$, $i\in\intvl1m$.
We write $q_i$ (resp.~$\bar q_i$) the number of occurrences of the literal $x_i$ (resp.~$\neg x_i$) in $\phi$, $i\in\intvl1m$. 
We also write $k(C_c)$ the number of literals appearing in the clause $C_c$, $c\in\intvl1\gamma$.
We can assume that $\gamma\geq 2$, that for each $c\in\intvl1\gamma$, we have $k(C_c)\geq 2$, 
and that for each $i\in\intvl1m$, $q_i\geq 2$ and $\bar q_i\geq 2$. 
(Otherwise, we can always add clauses of the form 
$(x_i\vee \neg x_i)$ to $\phi$, 
or duplicate the literals appearing in the clauses $C_c$ such that $k(C_c)=1$.)
In order to distinguish variables of an $l$-block-decomposition from $x_1,\ldots,x_m$, 
we always use the term \emph{boolean variable} for the latter.

The 3DT-instance $I_\phi$ is defined as an assembling of basic blocks: 
we first define a set of variables, then we list the blocks of which
the word representation of $I_\phi$ is the concatenation.
It is necessary that each variable is part of the input (resp. the output) 
of exactly one block. 
Note that the relative order 
of the blocks is of  no importance. We simply try, for readability 
reasons, to ensure that the source of a variable appears before its 
target, whenever possible. We say that a variable \emph{represents} a term, 
i.e. a literal, clause or formula,
if it can be activated only if this term is true 
(for some fixed assignment of the boolean variables), 
or if $\phi$ is satisfied by this assignment. We also say that a block 
\emph{defines} a variable if it is its source block.

The construction of $I_\phi$ is done as follows (see Figure~\ref{fig:schemaGlobal} for an example):
\begin{itemize}
\item Create a set of variables:
\begin{itemize}
\item For each $i\in\intvl1m$, create $q_i+1$ variables representing $x_i$: $X_i$ and $X_i^j$, $j\in\intvl{1}{q_i}$, and  
$\bar q_i+1$ variables representing $\neg x_i$: $\bar X_i$ and $\bar X_i^j$, $j\in\intvl{1}{\bar q_i}$.
\item For each $c\in\intvl1\gamma$, create a variable $\Gamma_c$ representing the clause $C_c$.
\item Create $m+1$ variables, $A_\phi$ and $A_\phi^i$, $i\in\intvl1m$, representing the formula $\phi$.
We will show that $A_\phi$ has a key role in the construction: 
it can be activated only if $\phi$ is satisfiable, and, once activated, 
it allows every remaining variable to be activated.
\item We also use a number of intermediate variables, with names $U$, $\bar U$, $V$, $W$ and $Y$. 
\end{itemize}

\item Start with an empty 3DT-instance $\emptyinst$, and add blocks successively:
\begin{itemize}
\item For each $i\in\intvl1m$, add the following $q_i+\bar q_i-1$ blocks defining the variables $X_i$, $X_i^j$ ($j\in\intvl1{q_i}$), 
and $\bar X_i$, $\bar X_i^j$ ($j\in\intvl1{\bar q_i}$):
\begin{align*}
   &&&\makebox[1pt][c]{$[X_i,\bar X_i]= \blockFont{var}(A_\phi^i)$} \\
	[X_i^1,U_i^2] &= \blockFont{copy}(X_i)&&&
	[\bar X_i^1,\bar U_i^2] &= \blockFont{copy}(\bar X_i)\\
	[X_i^2,U_i^3] &= \blockFont{copy}(U_i^2)&&&
	[\bar X_i^2,\bar U_i^3] &= \blockFont{copy}(\bar U_i^2) \\
	&\:\ \vdots&&&&\:\ \vdots\tag{${*}$}\label{eq:group1}\\
	[X_i^{q_i-2},U_i^{q_i-1}] &= \blockFont{copy}(U_i^{q_i-2})&&&
&\:\ \vdots\\
	[X_i^{q_i-1},X_i^{q_i}] &= \blockFont{copy}(U_i^{q_i-1})&&&
	[\bar X_i^{\bar q_i-2},\bar U_i^{\bar q_i-1}] &= \blockFont{copy}(\bar U_i^{\bar q_i-2})\\
&&&&
	[\bar X_i^{\bar q_i-1},\bar X_i^{\bar q_i}] &= \blockFont{copy}(\bar U_i^{\bar q_i-1})
\end{align*}
\item For each $c\in\intvl1\gamma$, 
let $C_c=\lambda_1\vee \lambda_2\vee \ldots \vee \lambda_k$, with $k=k(C_c)$. 
Let each $\lambda_p$, $p\in\intvl1k$, be the $j$-th occurrence of a literal $x_i$ or $\neg x_i$, 
for some $i\in\intvl1m$ and $j\in\intvl1{q_i}$ (resp. $j\in\intvl1{\bar q_i}$). 
We respectively write $L_p=X_i^j$ or $L_p=\bar X_i^j$. 
We add the following $k-1$ blocks defining $\Gamma_c$:
\begin{align*}
	V_c^2&= \blockFont{or}(L_1,L_2)\\
	V_c^3&= \blockFont{or}(V_c^2,L_3)\\
	& \ \:\vdots \tag{${*}{*}$}\label{eq:group2}\\
	V_c^{k-1}&= \blockFont{or}(V_c^{k-2},L_{k-1})\\
	\Gamma_c&= \blockFont{or}(V_c^{k-1},L_k)
\end{align*}
\item Since $\phi=C_1\wedge C_2\wedge \ldots \wedge C_\gamma$, the formula variable $A_\phi$ is defined by the following $\gamma-1$ blocks:
\begin{align*}
	W_2&= \blockFont{and}(\Gamma_1,\Gamma_2)\\
	W_3&= \blockFont{and}(W_2,\Gamma_3)\\
	& \ \:\vdots \tag{${*}{*}{*}$}\label{eq:group3}\\
	W_{\gamma-1}&= \blockFont{and}(W_{\gamma-2},\Gamma_{\gamma-1})\\
	A_\phi&= \blockFont{and}(W_{\gamma-1},\Gamma_l)
\end{align*}
\item The $m$ copies $A^1_\phi,\ldots,A^m_\phi$ of $A_\phi$ are defined with the following $m-1$ blocks:
\begin{align*}
	[A_\phi^1,Y_2] &= \blockFont{copy}(A_\phi)\\
	[A_\phi^2,Y_3] &= \blockFont{copy}(Y_2)\\
	& \ \:\vdots  \tag{${*}{*}{*}{*}$}\label{eq:group4}\\
	[A_\phi^{m-2},Y_{m-1}] &= \blockFont{copy}(Y_{m-2})\\
	[A_\phi^{m-1},A_\phi^{m}] &= \blockFont{copy}(Y_{m-1})\\
\end{align*}
\end{itemize}
\end{itemize}

\begin{figure}
\begin{center}
\makebox[1pt][c]{
\includegraphics{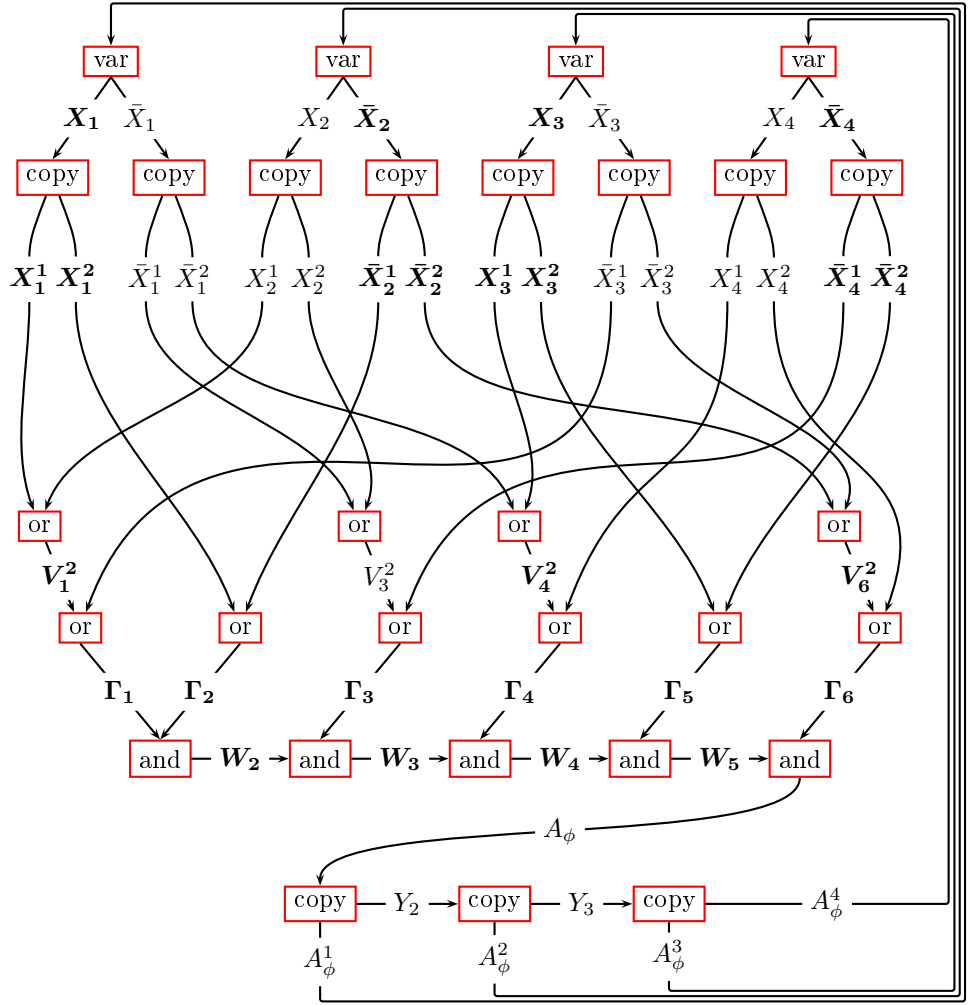}
}
\end{center}
\caption{\label{fig:schemaGlobal}
Schematic diagram of the blocks defining $I_\phi$ for 
$\phi=
{(x_1\vee x_2\vee \neg x_3)}\wedge
{(x_1\vee \neg x_2)}\wedge
{(\neg x_1\vee x_2\vee \neg x_4)}\wedge
{(\neg x_1\vee x_3\vee x_4)}\wedge
{(x_3\vee \neg x_4)}\wedge
{(\neg x_2\vee \neg x_3\vee x_4)}$. 
For each variable, we draw an arc between its source and target block.
Note that $\phi$ is satisfiable (e.g. with the assignment 
$x_1=x_3=true$ and $x_2=x_4=false$).
A set of variables that can be activated before $A_\phi$ is in bold, 
they correspond to the terms being true in $\phi$ for the assignment
$x_1=x_3=true$ and $x_2=x_4=false$.
}
\end{figure}
%


\subsection{Main Result}

\begin{thm} \label{thm:satisfiability}
Let $\phi$ be a boolean formula, and $I_\phi$ the 3DT-instance defined 
in Section~\ref{sect:assemblage}. The construction of $I_\phi$ is 
polynomial in the size of $\phi$,  
and $\phi$ is satisfiable iff $I_\phi$ is 3DT-collapsible.
\end{thm}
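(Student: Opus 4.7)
The plan is first to observe that $I_\phi$ has size polynomial in $|\phi|$, since each basic block has constant size and there are $O(\sum_i(q_i+\bar q_i)+m+\gamma)$ of them in total. For the equivalence, I would rely on the abstract behavior of each basic block (Properties~\ref{prop:block_copy2}--\ref{prop:block_var}) combined with Lemma~\ref{lem:completeBD}, which guarantees that throughout any sequence of 3DT-steps on $I_\phi$ we remain in a valid context, so those behavior properties apply at every intermediate 3DT-instance.

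For $\phi$ satisfiable $\Rightarrow$ $I_\phi$ 3DT-collapsible, I would fix a satisfying assignment and build a collapsing sequence in the following phases: (1) in each $[X_i,\bar X_i]=\blockFont{var}(A_\phi^i)$, activate the literal variable matching the assignment (allowed before $A_\phi^i$ by Property~\ref{prop:block_var}); (2) propagate through the copy chains~(\ref{eq:group1}) to every true-literal copy $X_i^j$ or $\bar X_i^j$; (3) for each clause $C_c$ pick a true literal $L_{p^*}$ in $C_c$ and activate $V_c^{p^*},\ldots,\Gamma_c$ along chain~(\ref{eq:group2}) in increasing order, using $L_{p^*}$ at the first step and each previously activated $V_c^{p-1}$ thereafter; (4) activate $W_2,\ldots,A_\phi$ along the and-chain~(\ref{eq:group3}); (5) propagate $A_\phi$ through the copy chain~(\ref{eq:group4}) to every $A_\phi^i$; (6) return to each $\blockFont{var}$ block and activate the remaining (false-literal) variable, now allowed by Property~\ref{prop:block_var}; (7) propagate through the corresponding false-literal copy chains; (8) finally, activate the remaining or-chain variables $V_c^p$ with $p<p^*$, which are all triggerable by the $L_p$'s that are now activated. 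At that point no variable remains, so Lemma~\ref{lem:completeBD} lets us remove the remaining internal triples in any order.

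Conversely, given a collapsing sequence for $I_\phi$, I would read off the assignment from the literal variables activated before $A_\phi$. The key observation is that for each $i$, at most one of $X_i,\bar X_i$ precedes $A_\phi$ in the sequence: iterated Property~\ref{prop:block_copy2} along the chain~(\ref{eq:group4}) places $A_\phi^i$ strictly after $A_\phi$, while Property~\ref{prop:block_var} on $[X_i,\bar X_i]=\blockFont{var}(A_\phi^i)$ forbids both outputs to precede $A_\phi^i$. Setting $x_i$ to true iff $X_i$ precedes $A_\phi$ (and arbitrarily otherwise) then gives a well-defined assignment. To check that it satisfies $\phi$: iterated Property~\ref{prop:block_and} along~(\ref{eq:group3}) forces each $\Gamma_c$ to be activated before $A_\phi$; iterated Property~\ref{prop:block_or} along~(\ref{eq:group2}) then forces some literal variable $L_p$ (a copy $X_i^j$ or $\bar X_i^j$) to be activated before $\Gamma_c$; and iterated Property~\ref{prop:block_copy2} along~(\ref{eq:group1}) forces the source $X_i$ or $\bar X_i$ to be activated before $L_p$, hence before $A_\phi$, yielding a true literal of $C_c$.

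The main obstacle will be the bookkeeping of Phase~(8) above: once $V_c^p$ has been activated using $L_p$ as trigger, its still-unused input $V_c^{p-1}$ must be activated later, and one must verify against the orders listed in Property~\ref{prop:block_or} (specifically the orders of the form $(A_2,A,A_1)$) that this delayed activation remains available in the partly collapsed $\blockFont{or}$ block. The analogous checks in Phases~(6)--(7) for $\blockFont{var}$ and $\blockFont{copy}$ blocks are routine, but must be coordinated so that at each intermediate step the context remains valid in the sense of Lemma~\ref{lem:completeBD}.
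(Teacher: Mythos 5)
Your proposal is correct and follows essentially the same route as the paper: the same phased activation schedule for the forward direction (true literals, copy chains, or-chains, and-chain, $A_\phi$, its copies, then the leftover false-literal and or-chain variables, closing with Lemma~\ref{lem:completeBD}), and the same backward argument reading the assignment off the set of variables activated before $A_\phi$. The Phase~(8) concern you flag is exactly what the listed order $(A_2,A,A_1)$ in Property~\ref{prop:block_or} is there to license, so it is not an actual gap.
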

\begin{proof}
The polynomial time complexity of the construction of $I_\phi$ is trivial. 
We use the same notations as in the construction, with $\BB$ the block decomposition of $I_\phi$. 
One can easily check, in~\eqref{eq:group1},~\eqref{eq:group2}, 
\eqref{eq:group3} and~\eqref{eq:group4}, each variable has exactly one source block
and one target block. Then, by Lemma~\ref{lem:completeBD},
we know that $(I_\phi,\BB)$ is a valid context, and remains so after any number of 3DT-steps, 
hence properties~\ref{prop:block_copy2}, \ref{prop:block_and}, \ref{prop:block_or} and \ref{prop:block_var} 
are satisfied by respectively 
each block $\blockFont{copy}$, $\blockFont{and}$, $\blockFont{or}$ and $\blockFont{var}$ of $I_\phi$.

\framebox{$\Rightarrow$}
Assume first that $\phi$ is satisfiable. Consider a truth assignment 
satisfying $\phi$: let $P$ be the set of indices $i\in\intvl1m$
such that $x_i$ is 
assigned to true. 
Starting from $I_\phi$, we can follow a path of 
3DT-steps that activates all the variables of $I_\phi$ in 
the following order:
\begin{itemize}
\item For $i\in\intvl1m$, if $i\in P$, activate $X_i$ in the corresponding 
block $\blockFont{var}$ in~\eqref{eq:group1}. Then, with the blocks $\blockFont{copy}$, 
activate successively all intermediate variables $U_i^j$ 
for $j=2$ to $q_i-1$, and variables $X^j_i$ for $j\in\intvl1{q_i}$.

Otherwise, if $i\notin P$, activate $\bar X_i$,  
all intermediate variables $\bar U_i^j$ for $j=2$ to $\bar q_i-1$, 
and the variables $\bar X^j_i$ for $j\in\intvl1{\bar q_i}$

\item For each $c\in\intvl1\gamma$, let $C_c=\lambda_1\vee \lambda_2\vee \ldots \vee \lambda_k$, with $k=k(C_c)$. 
Since $C_c$ is true with the selected truth assignment, at least one literal 
$\lambda_{p_0}$, ${p_0}\in\intvl1k$, is true.
If $\lambda_{p_0}$ is the $j$-th occurrence of a literal 
$x_i$ or $\neg x_i$, then the corresponding variable $L_{p_0}$ 
($L_{p_0}=X_i^j$ or $L_{p_0}=\bar X_i^j$) has been activated previously.
Using the blocks $\blockFont{or}$ in~\eqref{eq:group2},
 we activate successively each intermediate variable $V_c^{p}$ for 
$p={p_0}$ to $p=k-1$, and finally we activate the variable~$\Gamma_c$.

\item Since all variables $\Gamma_c$, $c\in\intvl1\gamma$, have been activated, 
using the blocks $\blockFont{and}$ in~\eqref{eq:group3},
we activate each intermediate variable $W_c$ for $c=2$ to $c=\gamma-1$, 
and the formula variable $A_\phi$.

\item With the blocks $\blockFont{copy}$  in~\eqref{eq:group4}, 
we activate successively all the intermediate variables $Y_i$, $i\in\intvl2{m-1}$ and the $m$ copies 
$A_\phi^1,\ldots,A_\phi^m$ of $A_\phi$.

\item For $i\in\intvl1m$, since the variable $A^i_\phi$ has been activated, we 
activate in the block $\blockFont{var}$ of~\eqref{eq:group1} 
the remaining variable $X_i$ or $\bar X_i$. We also activate 
all its copies and corresponding intermediate variables $U_i^j$ or $\bar U_i^j$.

\item For $c\in\intvl1\gamma$, in~\eqref{eq:group2}, since all variables $L_p$ 
have been activated, we
activate the remaining intermediate variables $V_c^{p}$.

\item At this point every variable has been activated. Using 
again Lemma~\ref{lem:completeBD}, we know that the resulting instance is 3DT-collapsible, 
and can be reduced down to the empty 3DT-instance $\emptyinst$.
\end{itemize}
Hence $I_\phi$ is 3DT-collapsible.

\framebox{$\Leftarrow$}
Assume now that $I_\phi$ is 3DT-collapsible: we consider a 
sequence of 3DT-steps reducing $I_\phi$ to $\emptyinst$.
This sequence gives a total order on the set of variables: the order
in which they are activated.
We write 
$Q$ the set of variables activated before $A_\phi$, and $P\subseteq\intvl1m$ the set of 
indices $i$ such that $X_i\in Q$ (see the variables in bold in Figure~\ref{fig:schemaGlobal}).
We show that the truth assignment defined by 
$(x_i=true \Leftrightarrow i\in P)$ satisfies the formula~$\phi$. 
\begin{itemize}

\item For each $i\in\intvl1m$, $A_\phi^i$ cannot belong to $Q$, using 
the property of the block $\blockFont{copy}$ in~\eqref{eq:group4} 
(each $A_\phi^i$ can only be activated after $A_\phi$). 
Hence, with the block 
$\blockFont{var}$ in~\eqref{eq:group1},  
we have ${\bar X_i\in Q \Rightarrow} X_i\notin Q$. 
Moreover, with the block 
$\blockFont{copy}$, we have 
\begin{align*}
\forall 1\leq j\leq q_i,&\quad X_i^j\in Q\Rightarrow X_i\in Q \tag{a}\label{eq:main_a}\\
\forall 1\leq j\leq \bar q_i,&\quad \bar X_i^j\in Q \Rightarrow \bar X_i\in Q \Rightarrow X_i\notin Q \tag{b}\label{eq:main_b}
\end{align*}

\item Since $A_\phi$ is defined in a block $A_\phi=\blockFont{and}(W_{\gamma-1},\Gamma_\gamma)$ in~\eqref{eq:group3}, 
we necessarily have $W_{\gamma-1}\in Q$ and $\Gamma_\gamma\in Q$. 
Likewise, since $W_{\gamma-1}$ is defined by $W_{\gamma-1}=\blockFont{and}(W_{\gamma-2},\Gamma_{\gamma-1})$, we also have
 $W_{\gamma-2}\in Q$ and $\Gamma_{\gamma-1}\in Q$. 
Applying this reasoning recursively, 
we have $\Gamma_c\in Q$ for each $c\in\intvl1\gamma$.

\item For each $c\in\intvl1\gamma$, consider the clause 
$C_c=\lambda_1\vee \lambda_2\vee \ldots \vee \lambda_k$, with $k=k(C_c)$. 
Using the property of the block $\blockFont{or}$ in~\eqref{eq:group2}, 
there exists some $p_0\in\intvl1k$ such that the 
variable $L_{p_0}$ is activated before $\Gamma_c$: hence $L_{p_0}\in Q$. 
If the corresponding literal $\lambda_{p_0}$ is the $j$-th occurrence of $x_i$ 
(respectively, $\neg x_i$), then $L_{p_0}=X_i^j$ (resp., $L_{p_0}=\bar X_i^j$), 
thus by~\eqref{eq:main_a} (resp. \eqref{eq:main_b}), $X_i\in Q$ (resp., $X_i\notin Q$), 
and consequently $i\in P$ (resp., $i\notin P$). 
In both cases, the literal $\lambda_{p_0}$ is true in the 
truth assignment defined by $(x_i=true \Leftrightarrow i\in P)$. 
\end{itemize}
If $I_\phi$ is 3DT-collapsible, we have found  a truth assignment such that at 
least one literal is true in each clause of the formula $\phi$, and thus $\phi$ is 
satisfiable.
\end{proof}


\newpage
\section{Sorting by Transpositions is NP-Hard}
As noted previously, there is no guarantee that any 3DT-instance 
$I$ has an equivalent permutation
$\pi$. However, with the following theorem, we show that such a 
permutation can be found in the special case
of assemblings of basic blocks, which is the case 
we are interested in, in order to complete our reduction.

\newcommand{\thisI}{I}
\newcommand{\thisPi}{\pi_I}

\begin{thm} \label{thm:createEquiv}
Let $I$ be a 3DT-instance of span $n$ with $\BB$ an $l$-block-decomposition such that $(I,\BB)$
is an assembling of basic blocks.
Then there exists a permutation $\pi_I$, 
computable in polynomial time in $n$, such that $I\sim \pi_I$.
\end{thm}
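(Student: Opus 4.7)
The plan is to reformulate the equivalence $I \sim \pi_I$ as a graph-theoretic condition on $I$, and then verify this condition for any assembling of basic blocks.

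I first define a function $g : \intvl{1}{n} \to \intvl{0}{n-1}$ by $g(v) = v-1$ if $v \notin L$ and $g(v) = \succI^{-1}(v) - 1$ if $v \in L$, where $L$ denotes the domain of $I$. Definition~\ref{def:equiv} then becomes exactly $\pi(0) = 0$ together with $\pi(v) = \pi(g(v)) + 1$ for all $v \in \intvl{1}{n}$. The images of the two cases of $g$ are respectively $\{w \in \intvl{0}{n-1} : w+1 \notin L\}$ and $\{w \in \intvl{0}{n-1} : w+1 \in L\}$, which partition $\intvl{0}{n-1}$; together with the injectivity of $\succI^{-1}$ on $L$, this makes $g$ a bijection onto $\intvl{0}{n-1}$, for any 3DT-instance.

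Next, I consider the directed graph $G_I$ on $\intvl{0}{n}$ with edge set $\{(g(v), v) : v \in \intvl{1}{n}\}$. Since $g$ is a bijection with image $\intvl{0}{n-1}$, vertex $0$ has in-degree $0$ and out-degree $1$, vertex $n$ has in-degree $1$ and out-degree $0$, and every other vertex has in- and out-degree $1$. Hence $G_I$ is the vertex-disjoint union of a single directed path from $0$ to $n$ together with a (possibly empty) collection of cycles contained in $\intvl{1}{n-1}$. A permutation $\pi_I$ with $I \sim \pi_I$ exists iff $G_I$ has no cycles, in which case the path from $0$ to $n$ is Hamiltonian and $\pi_I$ is obtained in $O(n)$ time by labelling its vertices $0, 1, \ldots, n$ in path order.

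It then remains to show that $G_I$ is acyclic whenever $(I, \BB)$ is an assembling of basic blocks. My plan is a case analysis over the four block types $\blockFont{copy}$, $\blockFont{and}$, $\blockFont{or}$, $\blockFont{var}$: for each block I read off from its word representation the edges of $G_I$ contributed by its internal triples, by the source-side elements $\{b, x, y\}$ of each output variable, and by the target-side elements $\{a, c, z\}$ of each input variable. Within a single block these edges should group its positions into a finite collection of directed path fragments with prescribed entry and exit elements. The main obstacle will be external triples, whose $\succI^{-1}$-edges jump between the source and target blocks of a variable: ruling out cycles in that situation requires a global argument. I would handle this by verifying that the wiring between blocks prescribed by the construction splices the per-block fragments together along these external edges into a single Hamiltonian path from $0$ to $n$, with no possibility of any splicing closing up into a cycle.
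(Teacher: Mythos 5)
Your reduction of the statement to a graph condition is sound, and it is genuinely different from (and in some ways cleaner than) the paper's argument: $g$ is indeed a bijection from $\intvl1n$ onto $\intvl0{n-1}$, the functional graph $G_I$ therefore decomposes into one $0\to n$ path plus cycles, and $I\sim\pi$ has a solution precisely when there are no cycles (a cycle $v_1\to\cdots\to v_k\to v_1$ would force $\pi(v_1)=\pi(v_1)+k$). This also gives the cleanest possible algorithm, namely tracing the path $w\mapsto\succI(w+1)$ from $0$ (note that $L=\intvl1n$ for an assembling of basic blocks, so the $v\notin L$ branch of $g$ never occurs). The paper instead writes down $\pi_I$ in closed form, block by block, via offsets $p_h,q_h$ and per-variable anchors $\alpha(A),\beta(A)$, and then verifies the recurrence of Definition~\ref{def:equiv} case by case; your formulation makes explicit the obstruction (cycles) that the paper only illustrates by example.

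The gap is that you stop exactly where the theorem's content begins: you never prove that $G_I$ is acyclic when $(I,\BB)$ is an assembling of basic blocks. The last paragraph is a plan (``I would handle this by verifying\dots''), not an argument, and the step it defers is not routine. Each variable $A=[(a,b,c),(x,y,z)]$ contributes two edges from its target block to its source block ($\psi(a)-1\to\psi(b)$ and $\psi(z)-1\to\psi(x)$) and two in the other direction, and the assignment of sources to targets is essentially arbitrary (any pairing in which each variable has one source and one target block is allowed), so acyclicity cannot be checked instance by instance: you must exhibit a wiring-independent, block-local invariant --- e.g., that in each of the four block types the positions decompose into path fragments whose entry/exit points are matched to the $a,z$ (resp.\ $b,x$) positions in a way that always splices into a single $0\to n$ path. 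Identifying and verifying that invariant for $\blockFont{copy}$, $\blockFont{and}$, $\blockFont{or}$ and $\blockFont{var}$ is precisely the work the paper does through the sets $P_h$, the rules $(R_1)$--$(R_5)$ and Table~\ref{tab:result}; without an equivalent verification your proof establishes only the (correct) criterion, not the theorem.
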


An example of the construction of $\pi_I$ for the 3DT-instance defined in Example~\ref{ex:ToyAssembling} 
is given in Figure~\ref{fig:exToyEquiv}.

\begin{proof}
Let $\mathcal A$ be the set of variables of the $l$-block-decomposition $\BB$ of $\thisI=\inst$. 
Let $n$ be the span of $\thisI$, and $L$ its domain. Note that $L=\intvl1n$.
For any $h\in\intvl1l$, we write $\ni(\B h)$ (resp. $\no(\B h)$) the number of 
input (resp. output) variables of $\B h$. We also define two integers $p_h,q_h$ by:
\begin{eqnarray*}
&& p_1=0 \\
\forall h\in\intvl1l, && q_h=p_h+t_h-s_h+3(\ni(\B h)-\no(\B h))\\
\forall h\in\intvl2{l},&& p_{h}=q_{h-1}
\end{eqnarray*}
The permutation $\thisPi$ will be defined such that $p_h$ and $q_h$ have the following property for any $h\in\intvl1l$: 
$\thisPi(s_h)=p_h$, and $\thisPi(t_h)=q_h$.

We also define two applications $\alpha,\beta$ over the set $\mathcal A$ of variables. 
The permutation $\thisPi$ will be defined so that, for any variable $A=[(a,b,c),(x,y,z)]$, 
we have $\thisPi(\psi(a)-1)=\alpha(A)$ and $\thisPi(\psi(z)-1)=\beta(A)$. 
In order to have this property, $\alpha$ and $\beta$ are defined as follows.

For each $h\in \intvl1l$: \begin{itemize}
\item
       If $\B h$ is a block of the kind $[A_1,A_2]=\blockFont{copy}(A)$, 
 define
\begin{equation*}
\alpha(A)=p_h,\ \beta(A)=p_h+4.
\end{equation*} 

\item
       If $\B h$ is a block of the kind $A=\blockFont{and}(A_1,A_2)$, 
 define
\begin{equation*}
\alpha(A_1)=p_h,\ \beta(A_1)=p_h+7,\ \alpha(A_2)=p_h+3,\ \beta(A_2)=p_h+9.\end{equation*} 

 \item
       If $\B h$ is a block of the kind $A=\blockFont{or}(A_1,A_2)$, 
 define
\begin{equation*}
\alpha(A_1)=p_h,\ \beta(A_1)=p_h+13,\ \alpha(A_2)=p_h+3,\ \beta(A_2)=p_h+16.
\end{equation*}

\item
       If $\B h$ is a block of the kind $[A_1,A_2]=\blockFont{var}(A)$, 
 define
\begin{equation*}
\alpha(A)=p_h+5,\  \beta(A)=p_h+9.
\end{equation*}

\end{itemize}

Note that for every $A\in\mathcal A$, $\alpha(A)$ and $\beta(A)$ are defined once and only once,
depending on the kind of the block $\B{target(A)}$.
The permutation $\thisPi$ is designed in such a way that 
the image by $\thisPi$ of an interval $\intvl{s_h+1}{t_h}$ is essentially the interval $\intvl{p_h+1}{q_h}$.
However, there are exceptions: namely, for each variable $A$, the integers $\alpha(A)+1,\alpha(A)+2,\beta(A)+1$,
which are included in $\intvl{p_{target(A)}+1}{q_{target(A)}}$, are in the image of $\intvl{s_{source(A)}+1}{t_{source(A)}}$.
This is formally described as follows.
For each $h\in\intvl1k$ we define a set $P_h$ by:
\begin{eqnarray*}
P_h=\intvl{p_h+1}{q_h}
&\cup& \bigcup_{A\text{ output of }\B h} \{\alpha(A)+1,\alpha(A)+2,\beta(A)+1\} \\
&-& \bigcup_{A\text{ input of }\B h} \{\alpha(A)+1,\alpha(A)+2,\beta(A)+1\} 
\end{eqnarray*}
We note that the sets $\{\alpha(A)+1,\alpha(A)+2,\beta(A)+1\}$ are distinct for different 
variables $A$, and are each included in their respective interval $\intvl{p_{target(A)}+1}{q_{target(A)}}$.
Hence for any $h\in\intvl1l$, we have 
$\lvert P_h\rvert=q_h-p_h + 3\no(\B h)- 3\ni(\B h) = t_h-s_h$. 
Moreover, the sets $P_h$, $h\in\intvl1l$, form a partition of the set $\intvl1n$.

We can now create the permutation $\thisPi$. The image of $0$ is $0$, and for each $h_0$ from $1$ to $l$,
we define the restriction of $\thisPi$ over $\intvl{s_{h_0}+1}{t_{h_0}}$ as a permutation of $P_{h_0}$,
with the constraint that $\thisPi(t_{h_0})=q_{h_0}$.
Note that, if this condition is fulfilled, then 
we can assume $\thisPi(s_{h_0})=p_{h_0}$, 
since, if ${h_0}=1$, $\thisPi(s_1)=\thisPi(0)=0=p_1$, 
and if ${h_0}>1$, $\thisPi(s_{h_0})=\thisPi(t_{h_0-1})=q_{h_0-1}=p_{h_0}$.

The definition of $\thisPi$ over
each kind of block is given in Table~\ref{tab:result}. This table is obtained
by applying the following rules, 
until $\thisPi(u)$ is defined for all $u\in \intvl{s_{h_0}+1}{t_{h_0}}$.

\begin{table}
\caption{Definition of $\thisPi$ over an interval $\intvl{s_{h_0}+1}{t_{h_0}}$, where $\B{h_0}$ is one of the blocks $\blockFont{copy}$, $\blockFont{and}$,
$\blockFont{or}$, $\blockFont{var}$. 
We write $s=s_{h_0}$ and $p=p_{h_0}$. We give the line $\psi^{-1}(u)$ 
as a reminder of the definition of each block. 
We also add a column for $u=s$ as a reminder of the fact that $\thisPi(s)=p$.
\label{tab:result}}
\begin{itemize} \addtolength{\arraycolsep}{-2pt}
\item If $\B{h_0}$ is a block of the kind $[A_1,A_2]=\blockFont{copy}(A)$, 
we write $\alpha_1,\beta_1,\alpha_2,\beta_2$ the respective values of $\alpha(A_1),\beta(A_1),\alpha(A_2),\beta(A_2)$.
$$\begin{array}{c@{=}ccccccccccccccccc}
u     &\ \ s\ \  &s{+}1&s{+}2&s{+}3&s{+}4&s{+}5&s{+}6&s{+}7&s{+}8&s{+}9&s{+}10&s{+}11&s{+}12\\
\thisPi(u)     &p& \alpha_1{+}2 &p{+}8&p{+}4&p{+}3& \alpha_2{+}2 &p{+}7& \beta_1{+}1& \alpha_1{+}1 &p{+}6 & \beta_2{+}1 & \alpha_2{+}1 &p{+}9\\
\vspace{1em}\psi^{-1}(u)     &&a& y_1& e & z& d & y_2& x_1& b_1& c& x_2& b_2& f 
\end{array}$$
\item If $\B{h_0}$ is a block of the kind $A=\blockFont{and}(A_1,A_2)$, 
we write $\alpha,\beta$ the respective values of $\alpha(A),\beta(A)$.
$$\begin{array}{c@{=}ccccccccccccccccc}
u    &\ \ s\ \ &s{+}1&s{+}2&s{+}3&s{+}4&s{+}5&s{+}6&s{+}7&s{+}8&s{+}9 &s{+}10&s{+}11&s{+}12\\
\thisPi(u)    &p&p{+}14 &p{+}7&p{+}3& p{+}13 &p{+}9&p{+}6 &\alpha{+}2& p{+}12& p{+}11 & \beta{+}1 &\alpha{+}1  & p{+}15\\
\vspace{1em}\psi^{-1}(u)     &&a_1& e& z_1& a_2& c_1& z_2& d& y& c_2 & x& b& f
\end{array}$$
\item If $\B{h_0}$ is a block of the kind $A=\blockFont{or}(A_1,A_2)$, 
we write $\alpha,\beta$ the respective values of $\alpha(A),\beta(A)$.
$$\begin{array}{c@{=}cccccccccccccccccccccccc}
u    &s&s{+} 1&s{+}2&s{+}3&s{+}4&s{+}5&s{+}6&s{+}7&s{+}8&s{+}9\\
\thisPi(u)    &p&p{+}7 &p{+}13 &p{+}3 &p{+}9 &\alpha{+}2 &p{+}12 &p{+}11 &\beta{+}1 &\alpha{+}1 \\
\vspace{1em}\psi^{-1}(u)    &&a_1& b'& z_1& a_2& d& y& a'& x& b \\
u    &s{+}10&s{+}11&s{+}12&s{+}13&s{+}14&s{+}15\\
\thisPi(u)     &p{+}16 &p{+}6 &p{+}15 &p{+}10 &p{+}8 &p{+}18\\
\vspace{1em}\psi^{-1}(u)    & f& z_2& c_1& e& c'& c_2
\end{array}$$
\item If $\B{h_0}$ is a block of the kind $[A_1,A_2]=\blockFont{var}(A\zz)$, 
we write $\alpha_1,\beta_1,\alpha_2,\beta_2$ the respective values of $\alpha(A_1),\beta(A_1),\alpha(A_2),\beta(A_2)$.
$$\begin{array}{c@{=}cccccccccccccccccccccccc}
u    &s&s{+} 1&s{+}2&s{+}3&s{+}4&s{+}5&s{+}6&s{+}7&s{+}8&s{+}9\\
\thisPi(u)    &p& \alpha_1{+}2 &p{+}5 &p{+}3 &\alpha_2{+}2 &p{+}12 &p{+}1 &p{+}14 &p{+}4 &\beta_1{+}1 \\
\vspace{1em}\psi^{-1}(u)    &&d_1& y_1& a\zz& d_2& y_2& e_1& a'& e_2& x_1\\
u    &s{+}10&s{+}11&s{+}12&s{+}13&s{+}14&s{+}15&s{+}16&s{+}17&s{+}18\\
\thisPi(u)    &\alpha_1{+}1 &p{+}13 &p{+}9 &p{+}8 &p{+}2 &p{+}11 &\beta_2{+}1&\alpha_2{+}1 &p{+}15 \\
\vspace{1em}\psi^{-1}(u)    & b_1& f_1& c'& z\zz& b'& c\zz& x_2& b_2& f_2
\end{array}$$
\end{itemize}\addtolength{\arraycolsep}{-2pt}
\end{table}

\begin{align*}
\forall A=[(a,b,c),(x,y,z)] &\text{ input variable of }\B{h_0} \\
\thisPi(\psi(z))&=\alpha(A)+3 \tag{$R_1$}\label{eqn:Ri1}\\
\thisPi(\psi(c))&=\beta(A)+2 \tag{$R_2$}\label{eqn:Ri2}\\
\forall A=[(a,b,c),(x,y,z)] &\text{ output variable of }\B{h_0} \\
\thisPi(\psi(x))&=\beta(A)+1 \tag{$R_3$}\label{eqn:Ro1}\\
\thisPi(\psi(b))&=\alpha(A)+1 \tag{$R_4$}\label{eqn:Ro2}\\
\forall u \in \intvl{s_{h_0}+1}{t_{h_0}} \text{ such that }&\succIph^{-1}(u)\in\intvl{s_{h_0}+1}{t_{h_0}}\\
\thisPi(u)&=\thisPi(\succI^{-1}(u)-1)+1 \tag{$R_5$}\label{eqn:Rdef}\\
\end{align*}

We can see in Table~\ref{tab:result} that rules~\eqref{eqn:Ri1} and~\eqref{eqn:Ri2} 
indeed apply to every input variable, and 
rules~\eqref{eqn:Ro1} and~\eqref{eqn:Ro2} apply to every output variable. Moreover:
\begin{equation*}
\begin{array}{c}
\text{Rule~\eqref{eqn:Rdef} applies to every $u\in \intvl{s_{h_0+1}}{t_{h_0}}$ such that}\\
u\notin \{\psi(b),\psi(c),\psi(x),\psi(z) \mid A=[(a,b,c),(x,y,z)]
\text{ input/output of }\B{h_0}\}. 
\end{array}
\tag{$P_1$}\label{eqn:Pu}
\end{equation*}
A simple case by case analysis shows that the following properties are also satisfied.
\begin{equation*}
\thisPi \text{ defines a bijection from $\intvl{s_{h_0}+1}{t_{h_0}}$ to $P_{h_0}$ such that } \thisPi(t_{h_0})=q_{h_0}\tag{$P_2$}\label{eqn:Pg1}
\end{equation*}%
\begin{align*}
\forall A=[(a,b,c),(x,y,z)] &\text{ input variable of }\B{h_0}, \\
\thisPi(\psi(a)-1)&=\alpha(A) \tag{$P_3$}\label{eqn:Pi1}\\
\thisPi(\psi(z)-1)&=\beta(A) \tag{$P_4$}\label{eqn:Pi2}\\
\forall A=[(a,b,c),(x,y,z)] &\text{ output variable of }\B{h_0}, \\
\thisPi(\psi(y)-1)&=\alpha(A)+2 \tag{$P_5$}\label{eqn:Po1}\\
\thisPi(\psi(b)-1)&=\beta(A)+1 \tag{$P_6$}\label{eqn:Po2}
\end{align*}

Now that we have defined the permutation $\thisPi$,
we need to show that $\thisPi$ is equivalent to $\thisI$. 
Following Definition~\ref{def:equiv}, we have $\thisPi(0)=0$.
Then, $L=\intvl{1}{n}$, so let us fix any $u\in\intvl{1}{n}$, and 
verify that $\thisPi(u)=\thisPi(\succI^{-1}(u)-1)+1$. 
Let $h$ be the integer such that $u\in\intvl{s_{h}+1}{t_{h}}$.

First consider the most general case, where there is no variable $A=[(a,b,c),(x,y,z)]$ 
such that $u\in\{\psi(b),\psi(c),\psi(x),\psi(z)\}$. Note that 
this case includes $u=\psi(d)$, where
$d$ is part of any internal triple. Then, by Property~\eqref{eqn:Pu}, 
we know that Rule~\eqref{eqn:Rdef} applies to $u$, hence we directly 
have $\thisPi(u)=\thisPi(\succI^{-1}(u)-1)+1$.

Suppose now that, for some variable $A=[(a,b,c),(x,y,z)]$, 
we have $u\in\{\psi(b),\psi(c),\psi(x),\psi(z)\}$. 
  Then Rules~\eqref{eqn:Ri1} and~\eqref{eqn:Ri2}, 
and Properties~\eqref{eqn:Pi1} and~\eqref{eqn:Pi2} 
apply in the target block of $A$.
  Also, Rules~\eqref{eqn:Ro1} and~\eqref{eqn:Ro2}, 
and Properties~\eqref{eqn:Po1} and~\eqref{eqn:Po2} 
apply in the source block of $A$.
  Combining all these equations together, we have:
\begin{align*}
\thisPi(\psi(b))&=\alpha(A)+1=\thisPi(\psi(a)-1)+1 && \text{by }\eqref{eqn:Ro1} \text{ and } \eqref{eqn:Pi1}  \\
\thisPi(\psi(c))&=\beta(A)+2=\thisPi(\psi(b)-1)+1 && \text{by }\eqref{eqn:Ri2} \text{ and } \eqref{eqn:Po1} \\
\thisPi(\psi(x))&=\beta(A)+1=\thisPi(\psi(z)-1)+1 && \text{by }\eqref{eqn:Ro2} \text{ and } \eqref{eqn:Pi2}  \\
\thisPi(\psi(z))&=\alpha(A)+3=\thisPi(\psi(y)-1)+1 && \text{by }\eqref{eqn:Ri1} \text{ and } \eqref{eqn:Po2} 
\end{align*}

For $u=\psi(b)$ (resp. $\psi(c),\psi(x),\psi(z)$), we have $\succI^{-1}(u)=\psi(a)$ (resp. $\psi(b),\psi(z),\psi(y)$). Hence, in all four cases, we have $\thisPi(u)=\thisPi(\succI^{-1}(u)-1)+1$,  which completes the proof that $\thisPi$ is equivalent to $\thisI$.

\end{proof}

\begin{figure}
\addtolength{\arraycolsep}{-3.62pt}
\small
$$\begin{array}{|rcc|cccccccccccccccccc|ccccccccccccccc|}
\hline\multicolumn{3}{|r| }{}&
\multicolumn{18}{c| }{[X_1,X_2]=\blockFont{var}(Y)}&
\multicolumn{15}{c| }{Y=\blockFont{or}(X_1,X_2)}
\\
&&&&&&&&&&&&&&&&&&&&&&&&&&&&&&&&&&&\\
\multicolumn{3}{|c|}{}&
\multicolumn{13}{r}{\text{Input variable (target of): }}&
\multicolumn{5}{l| }{Y}&
\multicolumn{11}{r}{\text{Input variables (target of): }}&
\multicolumn{4}{l| }{X_1,X_2}\\
\multicolumn{3}{|c|}{}&
\multicolumn{13}{r}{\text{Output variables (source of): }}&
\multicolumn{ 5}{l| }{X_1,X_2}&
\multicolumn{11}{r  }{\text{Output variable (source of): }}&
\multicolumn{ 4}{l| }{Y}\\
\multicolumn{3}{|c|}{}&
\multicolumn{18}{c| }{}&
\multicolumn{15}{c| }{}\\
\multicolumn{3}{|c|}{}&
\multicolumn{18}{c|}{
\begin{array}{rlcrl}
s_1&=0  &\qquad& t_1&= 18\\
p_1&=0  && q_1&=p_1+18-3=15
\end{array}
}&
\multicolumn{15}{c|}{
\begin{array}{rlcrl}
s_2&=18 &\qquad& t_2&= 33\\
p_2&=15 &&q_2&=p_2+15+3=33
\end{array}
}
\\ 
&&&&&&&&&&&&&&&&&&&&&&&&&&&&&&&&&&&\\
\multicolumn{3}{|c|}{}&
\multicolumn{18}{c|}{
\begin{array}{rl}
\alpha(Y)&=p_1+5=5 \\
 \beta(Y)&=p_1+9 =9
\end{array}
}&
\multicolumn{15}{c|}{
\begin{array}{rl}
\alpha(X_1)&=p_2=15 \\
 \beta(X_1)&= p_2+13=28\\
\alpha(X_2)&=p_2+3=18\\
\beta(X_2)&=p_2+16=31
\end{array}
}
\\ 
&&&&&&&&&&&&&&&&&&&&&&&&&&&&&&&&&&&\\\hline
\multicolumn{3}{|c|}{}&
\multicolumn{18}{c| }{\text{Definition of $\thisPi$ over $\intvl{s_1+1}{t_1}$}:}&
\multicolumn{15}{c| }{\text{Definition of $\thisPi$ over $\intvl{s_2+1}{t_2}$}:}\\

u&=&0&1&2&3&4&5&6&7&8&9&10&11&12&13&14&15&16&17&18&19&20&21&22&23&24&25&26&27&28&29&30&31&32&33\\
\thisPi(u)&=&0&17&5&3&20&12&1&14&4&29&16&13&9&8&2&11&32&19&15&22&28&18&24&7&27&26&10&6&31&21&30&25&23&33\\
\psi^{-1}(u)&=&&d_1&y_1&a&d_2& y_2& e_1& a'& e_2& x_1& b_1& f_1& c'& z& b'& c& x_2& b_2& f_2&a_1& b''& z_1& a_2& d& y& a''& x& b& f& z_2& c_1& e& c''& c_2\\\hline
\end{array}$$
\caption{
\label{fig:exToyEquiv} Creation of a permutation $\pi_I$ equivalent to the assembling of basic blocks
$I=\inst$ of span 33 defined in Example~\ref{ex:ToyAssembling}, following the proof of Theorem~\ref{thm:createEquiv}. 
}
\end{figure}

With the previous theorem, we now have all the necessary ingredients 
to prove the main result of 
this paper. 

\begin{thm}
The \textsc{Sorting by Transpositions} problem is \textsf{NP}-hard.
\end{thm}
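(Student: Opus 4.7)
The plan is to reduce SAT to \textsc{Sorting by Transpositions} by chaining the three main results developed in the paper. Given an instance $\phi$ of SAT, I would first construct the 3DT-instance $I_\phi$ together with its block-decomposition $\BB$ exactly as in Section~\ref{sect:assemblage}. By Theorem~\ref{thm:satisfiability}, this is done in polynomial time and satisfies $\phi$ is satisfiable iff $I_\phi$ is 3DT-collapsible.

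Next, since $(I_\phi,\BB)$ is an assembling of basic blocks, Theorem~\ref{thm:createEquiv} applies: I can compute, again in polynomial time in the span $n$ of $I_\phi$, a permutation $\pi_\phi=\pi_{I_\phi}$ of $\intvl{0}{n}$ such that $I_\phi \sim \pi_\phi$. This transports the 3DT question to the world of permutations.

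Now let $k=|T|$, where $T$ is the set of triples of $I_\phi$; $k$ is read off directly from the construction. By Property~\ref{prop:3kbreakpoints}, $d_b(\pi_\phi)=3k$, so by Property~\ref{prop:td-bd2}, $d_t(\pi_\phi)\geq k$. By Theorem~\ref{thm:useEquiv}, $I_\phi$ is 3DT-collapsible iff $d_t(\pi_\phi)=k$, equivalently iff $d_t(\pi_\phi)\leq k$. Combining these equivalences, $\phi$ is satisfiable iff the instance $(\pi_\phi,k)$ of \textsc{Sorting by Transpositions} is a yes-instance.

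The whole reduction is polynomial (each of the three constructions is), so since SAT is \textsf{NP}-hard, so is \textsc{Sorting by Transpositions}. There is no real obstacle at this final stage: all the technical work has already been done in the preceding theorems, and the argument is just a routine assembly of three equivalences. As a side remark, the same argument proves the corollary announced in the abstract, namely that deciding whether $\pi$ can be sorted using $d_b(\pi)/3$ transpositions is also \textsf{NP}-hard, since the target $k$ we use is precisely $d_b(\pi_\phi)/3$.
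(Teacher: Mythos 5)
Your proposal is correct and follows essentially the same route as the paper: construct $I_\phi$ via Theorem~\ref{thm:satisfiability}, build the equivalent permutation via Theorem~\ref{thm:createEquiv}, and close the loop with Theorem~\ref{thm:useEquiv} together with Properties~\ref{prop:3kbreakpoints} and~\ref{prop:td-bd2} to replace the equality $d_t=k$ by the inequality $d_t\leq k$. The only cosmetic difference is that you set $k=|T|$ where the paper writes $k=d_b(\pi_{I_\phi})/3=n/3$, which is the same quantity.
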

\begin{proof}
The reduction from SAT is as follows: given any instance $\phi$ of SAT, 
create a 3DT-instance $I_\phi$, being an assembling of basic blocks,
 which is 3DT-collapsible iff $\phi$ is 
satisfiable (Theorem~\ref{thm:satisfiability}).
Then create a 3-permutation $\pi_{I_\phi}$ equivalent to $I_\phi$ 
(Theorem~\ref{thm:createEquiv}). 
The above two steps can be done in polynomial time.
Finally, set $k=d_b(\pi_{I_\phi})/3=n/3$. We then have:
\begin{eqnarray*}
\phi \text{ is satisfiable}&\Leftrightarrow& I_\phi \text{ is 3DT-collapsible}\\
&\Leftrightarrow& d_t(\pi_{I_\phi})=k \text{ (by Theorem~\ref{thm:useEquiv}, since $\pi_{I_\phi}\sim I_\phi$)}\\
&\Leftrightarrow& d_t(\pi_{I_\phi})\leq k \text{ (by Property~\ref{prop:td-bd2})}. 
\end{eqnarray*}
\end{proof}

Note that the permutation $\pi_I$ defined by Theorem~\ref{thm:createEquiv} 
is in fact a 
3-permutation, i.e. a permutation whose cycle graph contains only 3-cycles~\cite{BafnaPevzner95} 
(which is equivalent to saying that the application $succ$ defined by 
$succ(u)=\pi_I^{-1}(\pi_I(u-1)+1)$ has no fixed point, and is such that 
$succ\circ succ \circ succ$ is the identity). Moreover, the number of breakpoints of $\pi_I$ is $d_b(\pi_I)=n$. Hence we have the following corollary.

\begin{corollary}
The following two decision problems \cite{Christie98} are \textsf{NP}-hard:
\begin{itemize}
\item   Given a permutation $\pi$ of $\intvl0n$, is the equality $d_t(\pi)=d_b(\pi)/3$ satisfied? 
\item Given a 3-permutation $\pi$ of $\intvl0n$, is the equality $d_t(\pi)=n/3$ satisfied?
\end{itemize}
\end{corollary}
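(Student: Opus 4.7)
The plan is to reuse the polynomial reduction from SAT constructed in the preceding NP-hardness theorem, observing that the permutation $\pi_{I_\phi}$ it produces already carries the two extra structural properties needed to certify both decision problems as NP-hard.

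First I would recall the chain of equivalences established just above: for any boolean formula $\phi$, Theorem~\ref{thm:satisfiability} yields $\phi$ satisfiable iff $I_\phi$ is 3DT-collapsible, while Theorems~\ref{thm:createEquiv} and~\ref{thm:useEquiv} yield $I_\phi$ 3DT-collapsible iff $d_t(\pi_{I_\phi}) = |T| = d_b(\pi_{I_\phi})/3$. Since $\pi_{I_\phi}$ is computable from $\phi$ in polynomial time, the map $\phi \mapsto \pi_{I_\phi}$ is already a polynomial reduction from SAT to the first decision problem, establishing its NP-hardness.

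Next I would justify the two structural properties of $\pi_{I_\phi}$ stated in the remark preceding the corollary. For $d_b(\pi_{I_\phi}) = n$: inspection of the four basic block definitions shows that the word representations of $\blockFont{copy}$, $\blockFont{and}$, $\blockFont{or}$ and $\blockFont{var}$ contain no dot symbol, so in any assembling of basic blocks the domain is $L = \intvl{1}{n}$, and Property~\ref{prop:3kbreakpoints} gives $d_b(\pi_{I_\phi}) = |L| = n$. For the 3-permutation property: unpacking Definition~\ref{def:equiv} on an integer $u \in \intvl{1}{n}$ with $v = \succI(u)$ shows $\pi_{I_\phi}(v) = \pi_{I_\phi}(u-1)+1$, so the cycle-graph successor $succ(u) = \pi_{I_\phi}^{-1}(\pi_{I_\phi}(u-1)+1)$ coincides with $\succI$; since $\succI$ has no fixed point and $\succI \circ \succI \circ \succI$ is the identity on $L = \intvl{1}{n}$, the same holds for $succ$, which is exactly the characterization of a 3-permutation.

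Putting these together, the reduction $\phi \mapsto \pi_{I_\phi}$ lands in the class of 3-permutations on $\intvl{0}{n}$ with $n = d_b(\pi_{I_\phi})$, and satisfies $\phi$ satisfiable iff $d_t(\pi_{I_\phi}) = n/3 = d_b(\pi_{I_\phi})/3$. This simultaneously establishes NP-hardness of both decision problems. The only non-routine step is the identification of $succ$ with $\succI$ from Definition~\ref{def:equiv}, but this is essentially mechanical; the polynomial-time bound on constructing $\pi_{I_\phi}$ is inherited directly from Theorems~\ref{thm:satisfiability} and~\ref{thm:createEquiv}.
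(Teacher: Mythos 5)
Your proposal is correct and follows exactly the paper's route: the paper justifies the corollary via the remark preceding it, namely that the reduction's permutation $\pi_{I_\phi}$ has $d_b(\pi_{I_\phi})=n$ (since $L=\intvl1n$) and is a 3-permutation because the successor map $succ$ coincides with $\succI$, which is fixed-point-free with $\succI\circ\succI\circ\succI$ the identity. You have merely spelled out these verifications in more detail than the paper does.
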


\section*{Conclusion}

In this paper we have proved that the \textsc{Sorting by Transpositions} 
problem is NP-hard, thus answering a long-standing question. 
 However, a number of questions remain open. 
For instance, does this problem admit a polynomial time approximation scheme? 
We note that the reduction we have provided does not answer this question,
since it is not a linear reduction. 
Indeed, by our reduction, if a formula $\phi$ 
is not satisfiable, it can be seen that we have 
$d_t(\pi_{I_\phi})=d_b(\pi_{I_\phi})/3+1$.

Also, does there exist some relevant parameters for which the problem is fixed parameter tractable? 
A parameter that comes to mind when dealing with the 
transposition distance is the size of the factors
exchanged (e.g., the value $\max\{j-i,k-j\}$ 
for a transposition $\tau_{i,j,k}$). 
Does the problem become tractable if we bound this parameter? 
In fact, the answer to this question is no if we bound only the size of 
the smallest factor, $\min\{j-i,k-j\}$: in our reduction, 
this parameter is upper bounded by 
$6$ for every transposition needed to 
sort $\pi_{I_\phi}$, independently of the formula $\phi$.

\clearpage

\bibliographystyle{plain}
\bibliography{biblio}

\end{document}